\DeclareMathOperator{\proj}{\ensuremath{\pi}}
\DeclareMathOperator{\fpol}{fPol}
\renewcommand{\vec}[1]{\ensuremath{\mathbf{#1}}}
\newcommand{\inst}{\ensuremath{I}}
\DeclareMathOperator{\rel}{{\bf R}}
\newcommand{\tup}[1]{\ensuremath{\mathbf #1}}
\newcommand{\ignore}[1]{}
\newcommand{\multiset}[1]{\ensuremath{\{\hspace*{-2pt}\{#1\}\hspace*{-2pt}\}}}
\DeclareMathOperator{\wrel}{{\bf \Phi}}
\DeclareMathOperator{\supp}{{\rm supp}}
\DeclareMathOperator{\CSP}{CSP}
\DeclareMathOperator{\VCSP}{VCSP}
\DeclareMathOperator{\feas}{Feas}
\DeclareMathOperator{\opt}{Opt}
\DeclareMathOperator{\rdom}{{Feas}}
\DeclareMathOperator{\pol}{Pol}
\DeclareMathOperator{\ops}{{\mathcal O}}
\DeclareMathOperator*{\E}{\mathop{{\mathbb E}}}
\DeclareMathOperator{\avg}{{\mathrm avg}}
\newcommand{\qq}{\ensuremath{\overline{\mathbb{Q}}}}
\date{}
\newtheorem{theorem}{Theorem}
\newtheorem{lemma}{Lemma} 
\newtheorem*{lemma*}{Lemma} 
\newtheorem*{proposition*}{Proposition} 
\newtheorem*{theorem*}{Theorem} 
\newtheorem{proposition}[theorem]{Proposition}
\newtheorem{corollary}{Corollary}
\newtheorem{definition}{Definition}
\theoremstyle{remark}
\newtheorem{example}{Example}
\newtheorem{problem}{Problem}
\begin{document}
\title{Sherali-Adams relaxations for valued CSPs\thanks{The authors were supported by
a London Mathematical Society Grant. Stanislav \v{Z}ivn\'y was supported
by a Royal Society University Research Fellowship.}}

\author{
Johan Thapper\\
Universit\'e Paris-Est, Marne-la-Vall\'ee, France\\
\texttt{thapper@u-pem.fr}
\and
Stanislav \v{Z}ivn\'{y}\\
University of Oxford, UK\\
\texttt{standa@cs.ox.ac.uk}
}

\maketitle

\begin{abstract}
We consider Sherali-Adams linear programming relaxations for solving
valued constraint satisfaction problems to optimality.
The utility of linear programming relaxations in this context have previously been demonstrated
using the lowest possible level of this hierarchy under the name of the basic
linear programming relaxation (BLP).
It has been shown that valued constraint languages containing only finite-valued weighted relations
are tractable if, and only if, the integrality gap of the BLP is 1.
In this paper, we demonstrate that almost all of the known tractable languages 
with arbitrary weighted relations have an integrality gap 1 for the Sherali-Adams
relaxation with parameters $(2,3)$.
The result is closely connected to the notion of bounded relational width
for the ordinary constraint satisfaction problem
and its recent characterisation.
\end{abstract}

\section{Introduction}

The constraint satisfaction problem  provides a common framework for many
theoretical and practical problems in computer science. An
instance of the \emph{constraint satisfaction problem} (CSP) consists of a
collection of variables that must be assigned labels from a given domain subject
to specified constraints. 
The CSP is NP-complete in general, but tractable fragments can be studied
by, following Feder and Vardi~\cite{Feder98:monotone}, restricting the constraint relations
allowed in the instances to a fixed, finite set, called the constraint language.
The most successful approach to
classifying the language-restricted CSP is the so-called algebraic
approach~\cite{Bulatov05:classifying,Barto14:jacm}.

An important type of algorithms for CSPs are \emph{consistency methods}.
A constraint language is of \emph{bounded relational width} if any CSP
instance over this language can be solved by establishing $(k,\ell)$-minimality for
some fixed integers $1 \leq k\leq\ell$~\cite{Barto14:jloc}. 
The power of consistency methods for
constraint languages has recently been fully
characterised~\cite{LaroseZadori07:au,Barto14:jacm} and it has been shown that
any constraint language that is of bounded relational width is of relational
width at most $(2,3)$\cite{Barto14:jloc}.

The CSP deals with only feasibility issues: Is there a solution satisfying certain
constraints? In this work we are interested in problems that capture both
feasibility and optimisation issues: What is the best solution satisfying
certain constraints? Problems of this form can be cast as valued constraint
satisfaction problems~\cite{jkz14:beatcs}.

An instance of the \emph{valued constraint satisfaction problem} (VCSP) is given
by a collection of variables that is assigned labels from a given domain
with the goal to \emph{minimise} an objective function given by a sum
of weighted relations, each depending on some subset of the
variables~\cite{Cohen06:complexitysoft}. The weighted relations can take on finite
rational values and positive infinity. 
The CSP corresponds to the special case of the VCSP when the codomain of all
weighted relations is $\{0,\infty\}$. 

Like the CSP,
the VCSP is NP-hard in general and thus we are interested in the restrictions
which give rise to tractable classes of problems. 
We restrict the \emph{valued constraint language}; that
is, all weighted relations in a given instance must belong to a fixed set of
weighted 
relations on the domain. The ultimate goal is to understand the computational
complexity of all valued constraint languages, that is, determine which
languages give rise to classes of problems solvable in polynomial time and which
languages give rise to classes of problems that are NP-hard. Languages of the
former type are called \emph{tractable}, and languages of the latter type are
called \emph{intractable}.
The computational complexity of Boolean (on a $2$-element domain)
valued constraint languages~\cite{Cohen06:complexitysoft} and conservative (containing
all $\{0,1\}$-valued unary weighted relations) valued constraint languages~\cite{kz13:jacm} have been
completely classified with respect to exact solvability. 

Every VCSP problem has a natural linear programming (LP)
relaxation, proposed independently by a number of authors, e.g.~\cite{Chekuri04:sidma},
and referred to as the \emph{basic} LP relaxation (BLP) of the VCSP.
It is the first level in the Sheralli-Adams hierarchy~\cite{Sherali1990},
which provides successively tighter LP relaxations of an integer LP. 
The BLP has been considered in the context of CSPs for robust
approximability~\cite{Kun12:itcs,Dalmau13:robust} and constant-factor
approximation~\cite{Ene13:soda,Dalmau15:soda}. 
Higher levels of Sherali-Adams hierarchy have been considered for
(in)approximability of CSPs~\cite{Vega07:soda,Yoshida14:itcs} but we are
not aware of any results related to exact solvability of (valued) CSPs.
Semidefinite programming
relaxations have also been considered in the context of CSPs for
approximability~\cite{Raghavendra08:stoc} and robust
approximability~\cite{Barto12:stoc}.

Consistency methods, and in particular strong 3-consistency
has played an important role as a preprocessing step in establishing
tractability of valued constraint languages. Cohen et al. proved the
tractability of valued constraint languages improved by a symmetric tournament
pair (STP) multimorphism via strong 3-consistency preprocessing, and an
involved reduction to submodular function
minimisation~\cite{Cohen08:Generalising}. 
They also showed that the
tractability of any valued constraint language improved by a tournament pair
multimorphism via a preprocessing using
results on constraint
languages invariant under a 2-semilattice polymorphism, which relies on
$(3,3)$-minimality, and then reducing to the STP case. The only tractable
conservative valued constraint languages are those admitting a pair of
fractional polymorphisms called STP and MJN~\cite{kz13:jacm}; again, the
tractability of such languages is proved via a 3-consistency preprocessing
reducing to the STP case. It is natural to ask whether this nested use of
consistency methods are necessary.

\subsection{Contributions}

In~\cite{tz12:focs,ktz15:sicomp}, the authors showed that the
BLP of the VCSP can be used to solve the problem for
many valued constraint languages. In~\cite{tz13:stoc}, it was then shown that
for VCSPs with weighted relations taking only finite values, the BLP precisely
characterises the tractable (finite-)valued constraint languages; i.e., if BLP fails to solve
any instance of some valued constraint language of this type, then this language
is NP-hard.

In this paper, we show that a higher-level Sherali-Adams linear programming
relaxation~\cite{Sherali1990} suffices to solve most of the previously known
tractable valued constraint languages with arbitrary weighted relations, and in
particular, all known valued constraint languages that involve some optimisation
(and thus do not reduce to constraint languages containing only relations)
except for valued constraint languages of generalised weak tournament pair
type~\cite{Uppman13:icalp}; such languages are known to be
tractable~\cite{Uppman13:icalp} but we do not
know whether they are tractable by our linear programming relaxation.

Our main result, Theorem~\ref{thm:main}, shows that if the support clone of a
valued constraint language $\Gamma$ of finite size contains weak near-unanimity
operations of all but finitely many arities, then $\Gamma$ is
tractable via the Sherali-Adams relaxation with parameters $(2,3)$.
This tractability condition is precisely the bounded relational width condition for constraint languages of
finite size containing all constants~\cite{LaroseZadori07:au,Barto14:jacm},
and our proof fundamentally relies on the results of Barto and Kozik~\cite{Barto14:jacm} and Barto~\cite{Barto14:jloc}.

It is
folklore that the $k$th level of Sherali-Adams hierarchy establishes
$k$-consistency for CSPs. We demonstrate that one linear
programming relaxation is powerful enough to establish consistency as well as
solving an optimisation problem in one go without the need of nested
applications of consistency methods.
For example, valued constraint languages having a tournament pair multimorphism
were previously known to be tractable using ingenious application of various
consistency techniques, advanced analysis of constraint networks using modular
decompositions, and submodular function
minimisation~\cite{Cohen08:Generalising}. Here, we show that an even less
restrictive condition (having a binary conservative commutative operation in
some fractional polymorphism) ensures that the Sherali-Adams relaxation solves
all instances to optimum.

Finally, we also give a short proof of the dichotomy theorem for conservative valued
constraint languages~\cite{kz13:jacm}, which previously needed lengthy arguments
(although we still rely on Takhanov~\cite{Takhanov10:stacs} for a part of the
proof). 

\section{Preliminaries}

\subsection{Valued CSPs}

Throughout the paper, let $D$ be a fixed finite set of size at least two.

\begin{definition}\label{def:rel}
An $m$-ary \emph{relation} over $D$ is any mapping $\phi:D^m\to\{c,\infty\}$ for
some $c\in\mathbb{Q}$. 
We denote by $\rel_D$ the set of all relations on $D$.\footnote{
An $m$-ary relation over $D$ is commonly defined as a subset of $D^m$. Note that
Definition~\ref{def:rel} is equivalent to the standard definition as any mapping
$\phi:D^m\to\{c,\infty\}$ represents the set $R=\{\tup{x}\in
D^m\:|\:\phi(\tup{x})<\infty\}$ and
any set $R\subseteq D^m$ can be represented
by $\phi_R:D^m\to\{0,\infty\}$ defined by $\phi_R(\tup{x})=0$ for $\tup{x}\in R$ and
$\phi_R(\tup{x})=\infty$ for $\tup{x}\not\in R$. Consequently, we shall use both definitions
interchangeably. 

}
\end{definition}

Let $\qq=\mathbb{Q}\cup\{\infty\}$ denote the set of rational numbers with (positive) infinity.

\begin{definition}\label{def:wrel}
An $m$-ary \emph{weighted relation} over $D$ is any mapping $\phi:D^m\to\qq$. We
write $ar(\phi)=m$ for the arity of $\phi$.
We denote by $\wrel_D$ the set of all weighted relations on $D$.
\end{definition}

For any $m$-ary weighted relation $\phi\in\wrel_D$, we denote by
$\feas(\phi)=\{\tup{x}\in D^m\:|\:\phi(\tup{x})<\infty\}\in\rel_D$ the
underlying $m$-ary \emph{feasibility relation}, and by
$\opt(\phi)=\{\tup{x}\in\feas(\phi)\:|\:\forall\tup{y}\in D^m:
\phi(\tup{x})\leq\phi(\tup{y})\}\in\rel_D$ the $m$-ary \emph{optimality relation},
which contains the tuples on which $\phi$ is minimised.
A weighted relation $\phi:D^m\to\qq$ is called \emph{finite-valued} if
$\feas(\phi)=D^m$.

\begin{definition}
Let $V=\{x_1,\ldots, x_n\}$ be a set of variables. A \emph{valued constraint} over $V$ is an expression
of the form $\phi(\tup{x})$ where $\phi\in \wrel_D$ and $\tup{x}\in V^{ar(\phi)}$. The number $m$ is called the \emph{arity} of the constraint,
the weighted relation $\phi$ is called the \emph{constraint weighted relation},
and the tuple $\tup{x}$ the \emph{scope} of the constraint.
\end{definition}

We call $D$ the \emph{domain}, the elements of $D$ \emph{labels}
and say that weighted relations take \emph{values}.

\begin{definition}
An instance of the \emph{valued constraint satisfaction problem} (VCSP) is specified
by a finite set $V=\{x_1,\ldots,x_n\}$ of variables, a finite set $D$ of labels,
and an \emph{objective function} $\inst$
expressed as follows:
\begin{equation}
\inst(x_1,\ldots, x_n)=\sum_{i=1}^q{\phi_i(\tup{x}_i)}\,,
\label{eq:sepfun}
\end{equation}
where each $\phi_i(\tup{x}_i)$, $1\le i\le q$, is a valued constraint over $V$.
Each constraint can appear multiple times in $\inst$.
The goal is to find an \emph{assignment} (or \emph{solution}) of labels to the
variables minimising $\inst$.
\end{definition}

A solution is called \emph{feasible} (or \emph{satisfying}) if it is of finite value. A VCSP instance
$I$ is called \emph{satisfiable} if there is a feasible solution to $I$.
CSPs are a special case of VCSPs with (unweighted) relations with the goal to
determine the existence of a feasible solution.
\begin{example} \label{ex:maxcut} 
In the \textsc{Min-UnCut} problem the goal is to find
a partition of the vertices of a given graph into two parts so that the number
of edges inside the two partitions is minimised.
For a graph $(V,E)$ with $V=\{x_1,\ldots,x_n\}$, this NP-hard problem can be expressed
as the VCSP instance $\inst(x_1,\ldots,x_n)=\sum_{(i,j)\in E} \phi_{\sf
xor}(x_i,x_j)$ over the Boolean domain $D=\{0,1\}$, where $\phi_{\sf
xor}:\{0,1\}^2\to\qq$ is defined by $\phi_{\sf
xor}(x,y)=1$ if $x=y$ and $\phi_{\sf xor}(x,y)=0$ if $x\neq y$.
\end{example}

\begin{definition}
Any set $\Delta\subseteq\rel_D$ is called a \emph{constraint language} over $D$.
Any set $\Gamma\subseteq\wrel_D$ is called a \emph{valued constraint
language} over $D$. We
denote by $\VCSP(\Gamma)$ the class of all VCSP instances in which the
constraint weighted relations are all contained in $\Gamma$. 
\end{definition}

For a constraint language $\Delta$, we denote by $\CSP(\Delta)$ the class
$\VCSP(\Delta)$ to emphasise the fact that there is no optimisation involved.

\begin{definition}
A valued constraint language $\Gamma$ is called \emph{tractable} if
$\VCSP(\Gamma')$ can be solved (to optimality) in
polynomial time for every finite subset $\Gamma'\subseteq\Gamma$, and $\Gamma$
is called \emph{intractable} if $\VCSP(\Gamma')$ is NP-hard for some finite
$\Gamma'\subseteq\Gamma$.
\end{definition}

Example~\ref{ex:maxcut} shows that the valued constraint language
$\{\phi_{\sf xor}\}$ is intractable. 

\subsection{Operations and Clones}

We recall some basic terminology from universal algebra.
Given an $m$-tuple $\tup{x}\in D^m$, we denote its $i$th entry by $\tup{x}[i]$ for $1\leq i\leq m$.
Any mapping $f:D^k\rightarrow D$ is called a $k$-ary \emph{operation}; $f$ is
called \emph{conservative} if $f(x_1,\ldots,x_k)\in\{x_1,\ldots,x_k\}$ and
\emph{idempotent} if $f(x,\ldots,x)=x$.
We will apply a $k$-ary operation $f$ to $k$ $m$-tuples
$\tup{x_1},\ldots,\tup{x_k}\in D^m$ coordinatewise, that is, \begin{equation}
f(\tup{x_1},\ldots,\tup{x_k})=(f(\tup{x_1}[1],\ldots,\tup{x_k}[1]),\ldots,f(\tup{x_1}[m],\ldots,\tup{x_k}[m]))\,.
\end{equation}

\begin{definition} \label{def:pol}
Let $\phi$ be an $m$-ary weighted relation on $D$. A $k$-ary operation $f$ on
$D$ is a \emph{polymorphism} of $\phi$ if,
for any $\tup{x_1},\ldots,\tup{x_k} \in D^m$ with
$\tup{x_i}\in\rdom(\phi)$ for all $1\leq i\leq k$,
we have that $f(\tup{x_1},\ldots,\tup{x_k})\in\rdom(\phi)$.

For any valued constraint language $\Gamma$ over a set $D$,
we denote by $\pol(\Gamma)$ the set of all operations on $D$ which are polymorphisms of all 
$\phi \in \Gamma$. We write $\pol(\phi)$ for $\pol(\{\phi\})$.
\end{definition}

A $k$-ary \emph{projection} is an operation of the form
$\proj^{(k)}_i(x_1,\ldots,x_k)=x_i$ for some $1\leq i\leq k$. Projections are
polymorphisms of all valued constraint languages. 

The \emph{composition} of a $k$-ary operation $f:D^k\rightarrow D$ with $k$
$\ell$-ary operations $g_i:D^\ell\rightarrow D$ for $1\leq i\leq k$ is the
$\ell$-ary function $f[g_1,\ldots,g_k]:D^\ell\to D$ defined by
\begin{equation}
f[g_1,\ldots,g_k](x_1,\ldots,x_\ell)=f(g_1(x_1,\ldots,x_\ell),\ldots,g_k(x_1,\ldots,x_\ell))\,.
\end{equation}

We denote by $\ops_D$ the set of all finitary operations on $D$ and by
$\ops_D^{(k)}$ the $k$-ary operations in $\ops_D$. 

A \emph{clone} of operations, $C\subseteq\ops_D$, is a set of operations on $D$ that contains
all projections and is closed under composition. 
It is easy to show that $\pol(\Gamma)$ is a clone for any valued constraint language $\Gamma$.

\begin{definition} \label{def:wop}
A $k$-ary \emph{fractional operation} $\omega$ 
is a probability distribution over $\ops_D^{(k)}$.
We define 
$\supp(\omega)=\{f\in \ops_D^{(k)} \mid \omega(f)>0\}$.
\end{definition}

\begin{definition} \label{def:wp} 
Let $\phi$ be an $m$-ary weighted relation 
on $D$ and let $\omega$ be a $k$-ary fractional operation 
on $D$.
We call $\omega$ a \emph{fractional polymorphism}
of $\phi$ if $\supp(\omega)\subseteq\pol(\phi)$ and for any
$\tup{x}_1,\ldots,\tup{x}_k \in D^m$ with
$\tup{x}_i\in\rdom(\phi)$ 
for all $1\leq i\leq k$, we have 
\begin{equation}
\E_{f\sim \omega}[\phi(f(\vec{x_1},\ldots,\vec{x_k}))]\ \le\
\avg\{\phi(\vec{x_1}),\ldots,\phi(\vec{x_k})\}\,.
\label{eq:wpol}
\end{equation}
We also say that $\phi$ is \emph{improved} by $\omega$.
\end{definition}

\begin{definition}
For any valued constraint language $\Gamma \subseteq \wrel_D$, we define $\fpol(\Gamma)$ to be the set of all
fractional operations
that are fractional polymorphisms of all weighted
relations $\phi \in \Gamma$. We write $\fpol(\phi)$ for $\fpol(\{\phi\})$.
\end{definition}

\begin{example}
A valued constraint language on domain $\{0,1\}$ is called \emph{submodular} if
it has the fractional polymorphism $\omega$ defined by
$\omega(\min)=\omega(\max)=\frac{1}{2}$, where $\min$ and $\max$ are the two
binary operations that return the smaller and larger of its two arguments
respectively with respect to the usual order $0<1$. 
\end{example}

\begin{definition}
Let $\Gamma$ be a valued constraint language on $D$. We define
\begin{equation}
\supp(\Gamma)\ =\ \bigcup_{\omega \in \fpol(\Gamma)} \supp(\omega)\,.
\end{equation}
\end{definition}

\begin{lemma}\label{lem:suppclone}
Let $\Gamma$ be a valued constraint language of finite size.
Then, $\supp(\Gamma)$ is a clone.
\end{lemma}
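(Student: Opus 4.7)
The plan is to verify the two defining properties of a clone: that $\supp(\Gamma)$ contains every projection, and that it is closed under composition.

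For the projection part, for each arity $k\ge 1$ I would take the uniform $k$-ary fractional operation $\omega$ on $\{\proj^{(k)}_1,\ldots,\proj^{(k)}_k\}$. Since each projection lies in $\pol(\Gamma)$ we have $\supp(\omega)\subseteq\pol(\Gamma)$, and for any $\phi\in\Gamma$ and any feasible $\vec{x}_1,\ldots,\vec{x}_k$ the inequality~\eqref{eq:wpol} actually holds with equality by direct computation. Hence $\omega\in\fpol(\Gamma)$ and each projection lies in $\supp(\Gamma)$.

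For closure under composition, take $f\in\supp(\Gamma)$ of arity $k$ and $g_1,\ldots,g_k\in\supp(\Gamma)$ of arity $\ell$, and pick fractional polymorphisms $\omega,\omega_1,\ldots,\omega_k$ of $\Gamma$ witnessing $f\in\supp(\omega)$ and $g_i\in\supp(\omega_i)$. I would then define an $\ell$-ary fractional operation $\omega'$ by sampling $f'\sim\omega$ and $g'_i\sim\omega_i$ independently and returning the composition $f'[g'_1,\ldots,g'_k]$; explicitly,
\begin{equation*}
\omega'(h)\;=\;\sum_{f'[g'_1,\ldots,g'_k]=h}\omega(f')\prod_{i=1}^k \omega_i(g'_i).
\end{equation*}
The composition $f[g_1,\ldots,g_k]$ then carries positive weight under $\omega'$, and $\supp(\omega')\subseteq\pol(\Gamma)$ because $\pol(\Gamma)$ is already known to be a clone.

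The remaining step, and the only point requiring any care, is checking that $\omega'$ satisfies~\eqref{eq:wpol} for every $\phi\in\Gamma$. Fix an $m$-ary $\phi\in\Gamma$ and tuples $\vec{x}_1,\ldots,\vec{x}_\ell\in\feas(\phi)$, and condition on the inner draws $g'_1,\ldots,g'_k$. Since each $g'_i\in\pol(\phi)$, the tuples $\vec{y}_i:=g'_i(\vec{x}_1,\ldots,\vec{x}_\ell)$ again lie in $\feas(\phi)$, so applying the fractional polymorphism property of $\omega$ to $\vec{y}_1,\ldots,\vec{y}_k$ gives $\E_{f'\sim\omega}[\phi(f'(\vec{y}_1,\ldots,\vec{y}_k))]\le\avg\{\phi(\vec{y}_1),\ldots,\phi(\vec{y}_k)\}$. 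Taking the expectation over each $\omega_i$ and using its fractional polymorphism inequality bounds the right-hand side by $\avg\{\phi(\vec{x}_1),\ldots,\phi(\vec{x}_\ell)\}$, which is exactly~\eqref{eq:wpol} for $\omega'$. The main ``obstacle'' is purely notational: keeping the nested expectation over $f'$ and the $g'_i$ organised so that one applies the inequalities in the correct order. Finiteness of $\Gamma$ is not essential to this argument, but it ensures that the relevant fractional polymorphisms live in a well-behaved polytope, which is the natural setting for defining $\supp(\Gamma)$.
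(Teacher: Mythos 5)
Your proposal is correct and follows essentially the same route as the paper: projections are obtained from the uniform distribution on $\{\proj^{(k)}_1,\ldots,\proj^{(k)}_k\}$, and closure under composition comes from the composed fractional operation $\omega'(h)=\Pr_{f'\sim\omega,\,g'_i\sim\omega_i}[f'[g'_1,\ldots,g'_k]=h]$, exactly as in the paper. The only difference is that you spell out the verification that $\omega'\in\fpol(\Gamma)$ (conditioning on the inner draws and applying the inequalities of $\omega$ and then of the $\omega_i$), which the paper dismisses as a straightforward check; your argument for that step is correct.
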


We note that Lemma~\ref{lem:suppclone} has also been
observed in~\cite{Ochremiak14:algebraic} and in~\cite{fz15:galois}.

\begin{proof}
Observe that $\supp(\Gamma)$ contains all projections as
$\tau_k\in\fpol(\Gamma)$ for every $k\geq 1$, where $\tau_k$ is the fractional
operation defined by $\tau_k(\proj^{(k)}_i)=\frac{1}{k}$ for every $1\leq i\leq
k$. Thus we only need to show that $\supp(\Gamma)$ is closed under composition.

Since $\omega\in\supp(\Gamma)$ there is $k$-ary $\omega \in \fpol(\Gamma)$ with
$\omega(f) > 0$. Moreover, since $g_1,\ldots,g_k\in\supp(\Gamma)$, for every $1\leq i\leq k$
there is $\ell$-ary $\mu_i\in\supp(\Gamma)$ with $\mu_i(g_i)>0$.
We define an $\ell$-ary fractional operation 
\begin{align}
\omega'(p)\ =\ \Pr_{\substack{t \sim \omega\\ h_i \sim \mu_i}} \left[ t[h_1, \ldots, h_k] = p \right]\,.
\end{align}

Since $\omega(f)>0$ and $\mu_i(g_i)>0$ for all $1\leq i\leq k$, we have
$\omega'(f[g_1,\ldots,g_k])>0$. 
A straightforward verification shows that $\omega'\in\fpol(\Gamma)$.
Consequently, $f[g_1,\ldots,g_k]\in\supp(\Gamma)$.
\end{proof}

The following lemma is a generalisation
of~\cite[Lemma~5]{tz12:arxiv-dichotomy-v2} from arity one to arbitrary arity and
from finite-valued to valued 
constraint languages, but the proof is analogous. A special case has also been observed, in
the context of Min-Sol problems~\cite{Uppman13:icalp}, by Hannes
Uppman.\footnote{Private communication.}

\begin{lemma}\label{lem:killing}
Let $\Gamma$ be a valued constraint language of finite size on a domain $D$ and let $f \in  \pol(\Gamma)$.
Then, $f \in \supp(\Gamma)$ if, and only if, $f \in \pol(\opt(I))$ for all instances $I$ of VCSP$(\Gamma)$.
\end{lemma}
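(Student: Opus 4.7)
The plan is to prove both implications, with the forward one being direct and the reverse requiring LP duality and an explicit instance construction.

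For the forward direction, suppose $f \in \supp(\omega)$ for some $k$-ary $\omega \in \fpol(\Gamma)$. For any instance $I$ and any $\tup{y}_1, \ldots, \tup{y}_k \in \opt(I)$, I apply (\ref{eq:wpol}) to each valued constraint of $I$ and sum:
\[\E_{g \sim \omega}\,[\inst(g(\tup{y}_1, \ldots, \tup{y}_k))] \;\leq\; \frac{1}{k}\sum_{j=1}^{k}\inst(\tup{y}_j) \;=\; \opt(I),\]
where the final equality uses that each $\tup{y}_j$ attains the optimum. Since $\inst(g(\tup{y})) \geq \opt(I)$ for any $g$, the expectation is pinned to $\opt(I)$, forcing $\inst(g(\tup{y})) = \opt(I)$ for every $g \in \supp(\omega)$ and in particular for $g = f$; hence $f(\tup{y}_1, \ldots, \tup{y}_k) \in \opt(I)$.

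For the reverse direction I argue the contrapositive. Assume $f \notin \supp(\Gamma)$ and consider the linear program that maximizes $\omega(f)$ over probability distributions $\omega$ on $\pol(\Gamma)^{(k)}$ subject to (\ref{eq:wpol}) for every $\phi \in \Gamma$ and every $(\tup{x}_1, \ldots, \tup{x}_k) \in \feas(\phi)^k$. Since $\Gamma$ and $D$ are finite, this LP is finite-dimensional; it is feasible (use $\omega = \tau_k$) and, by hypothesis, has maximum exactly $0$. Strong LP duality then furnishes non-negative rationals $\lambda_{(\phi, \tup{x}_1, \ldots, \tup{x}_k)}$ (only finitely many non-zero, which I scale to non-negative integers by clearing denominators) such that, writing $A(g) := \sum \lambda\,\phi(g(\tup{x}_1, \ldots, \tup{x}_k))$ and $c^\star := \sum \lambda\,\tfrac{1}{k}\sum_i \phi(\tup{x}_i)$,
\[A(g)\;\geq\;c^\star\;\text{ for every } g \in \pol(\Gamma)^{(k)}, \qquad A(f)\;\geq\;c^\star + 1.\]
I then build the instance $I$ of $\VCSP(\Gamma)$ on variable set $V := D^k$: for each witness tuple $(\phi, \tup{x}_1, \ldots, \tup{x}_k)$, I insert $\lambda$ copies of the constraint $\phi(\tup{v}_1, \ldots, \tup{v}_{ar(\phi)})$, where $\tup{v}_p := (\tup{x}_1[p], \ldots, \tup{x}_k[p]) \in V$. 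Under this construction a $V$-assignment is nothing but a $k$-ary operation on $D$, and its objective value in $I$ equals $A(\cdot)$. Thus the $k$ projections $\pi_1, \ldots, \pi_k$ attain value exactly $c^\star$ and $f$ itself attains value at least $c^\star + 1$. If I can show each projection lies in $\opt(I)$, then the coordinatewise image of the $k$-tuple $(\pi_1, \ldots, \pi_k) \in \opt(I)^k$ under $f$ is the assignment $v \mapsto f(v)$, namely $f$ itself, which is strictly suboptimal, giving $f \notin \pol(\opt(I))$ as required.

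The main obstacle is precisely this optimality claim for the projections, since the dual only bounds $A(g) \geq c^\star$ for $g \in \pol(\Gamma)^{(k)}$, leaving open the possibility that a $V$-assignment corresponding to a $k$-ary operation outside $\pol(\Gamma)$ but happening to preserve every witness tuple might have a finite $A$-value strictly below $c^\star$. To close this gap, I plan to augment $I$ by adding, for each $\phi \in \Gamma$ and every $(\tup{y}_1, \ldots, \tup{y}_k) \in \feas(\phi)^k$, a single copy of $\phi$ on the corresponding column variables; this makes every non-polymorphism $V$-assignment incur cost $+\infty$ at some augmented constraint while contributing an additive quantity to the objective that is symmetric across the $k$ projections. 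Scaling the dual-witness multiplicities by a sufficiently large integer $N$ makes the strict gap $N \cdot (A(f) - c^\star) \geq N$ dominate the bounded additive perturbation of the augmentation, so $f$ remains strictly worse than the projections. The delicate technical step I anticipate is ruling out that some other polymorphism $g$ tied with the projections on the witness part (i.e., with $A(g) = c^\star$) undercuts them on the augmentation part, which I expect to handle by choosing the dual solution carefully via complementary slackness with the primal-optimal distribution $\tau_k$, whose support consists exactly of the projections.
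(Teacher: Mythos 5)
Your forward direction is correct, and it is in fact more direct than the paper's: the paper derives that implication from the unsatisfiability of a Farkas system via an auxiliary instance built from $k$ optimal assignments, whereas you simply sum the inequality (\ref{eq:wpol}) over the constraints of $I$ at $\tup{y}_1,\dots,\tup{y}_k\in\opt(I)$ and use that every assignment's value is at least the optimum. (Your claim that each projection attains exactly $c^\star$ later on also needs the small averaging remark --- each $A(\proj^{(k)}_i)\ge c^\star$ while their average is $c^\star$ --- but that is routine.)

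The reverse direction, however, is not complete, and the missing piece is exactly the obstacle you flag. Your dual certificate only gives $A(g)\ge c^\star$ for $g\in\pol(\Gamma)$, while $\opt$ of the witness instance is taken over \emph{all} maps $D^k\to D$; so you must either prove the projections optimal against arbitrary assignments or modify the instance. Your proposed modification (one copy of $\phi$ on the column variables for every $\phi\in\Gamma$ and every tuple of $\feas(\phi)^k$, with the witness part scaled by $N$) does not close the gap, and the ``delicate technical step'' you defer cannot be rescued by complementary slackness. Since $f\notin\supp(\Gamma)$, \emph{every} $k$-ary fractional polymorphism has $\omega(f)=0$ and is therefore an optimal solution of your primal LP; complementary slackness between any optimal dual and these primal optima then forces $A(g)=c^\star$ for \emph{every} $k$-ary $g\in\supp(\Gamma)$, no matter how the optimal dual is chosen. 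So ties with the projections on the witness part are unavoidable, and whenever some such $g$ has strictly smaller cost than the projections on the added indicator part (nothing rules this out; it occurs as soon as $\supp(\Gamma)$ contains a non-projection operation that lowers the total value over all feasible tuples, as $\min$-like operations do), one gets $N\,A(g)+B(g)<N\,c^\star+B_0$ for every $N$, the projections are not optimal in the augmented instance, and you have no control over how $f$ acts on the assignments that are optimal. Hence the contrapositive direction remains unproven. For comparison, the paper's own proof simply asserts from (\ref{eq:killing2}) that the projections are optimal in $I_f$, i.e.\ it implicitly compares only against assignments lying in $\pol(\Gamma)$; you have correctly identified that this comparison class is too small, but identifying the issue is not the same as resolving it, and the repair you sketch would need a genuinely different idea.
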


\begin{proof}
The operation $f$ is in $\supp(\Gamma)$ if, and only if, there exists a fractional polymorphism
$\omega$ with $f \in \supp(\omega)$.
This is the case if, and only if, the following system of linear inequalities in 
the variables $\omega(g)$ for $g \in \pol(\Gamma)$ is satisfiable:
\begin{align}
  \sum_{g\in\pol(\Gamma)} \omega(g) \phi(f(\tup{x}_1,\dots,\tup{x}_k)) & \leq 
  \avg\{\phi(\tup{x}_1),\dots,\phi(\tup{x}_k)\}, \quad \forall \phi \in \Gamma, \tup{x}_i \in \feas(\phi), \notag \\
  \sum_{g\in\pol(\Gamma)} \omega(g) & = 1, \notag \\ 
  \omega(f) & > 0, \notag \\
  \omega(g) & \geq 0, \quad \forall g\in\pol(\Gamma).
  \label{eq:killing1}
\end{align}

By Farkas' lemma, the system (\ref{eq:killing1}) is unsatisfiable if, and only if, the following
system in variables $z(\phi,\tup{x}_1,\dots,\tup{x}_k)$, for $\phi \in \Gamma, \tup{x}_i \in \feas(\phi)$,
is satisfiable:
\begin{align}
  \sum_{\phi \in \Gamma, \tup{x}_i \in \feas(\phi)}
  z(\phi,\tup{x}_1, \dots, \tup{x}_k) \left( \avg\{\phi(\tup{x}_1),\dots,\phi(\tup{x}_k)\} 
  -\phi(g(\tup{x}_1,\dots,\tup{x}_k)) \right)  
  & \leq 0,
  \quad \forall g \in \pol(\Gamma), \notag \\
  \sum_{\phi \in \Gamma, \tup{x}_i \in \feas(\phi)}
  z(\phi,\tup{x}_1,\dots,\tup{x}_k) \left( \avg\{\phi(\tup{x}_1),\dots,\phi(\tup{x}_k)\} 
  - \phi(f(\tup{x}_1, \dots, \tup{x}_k)) \right)
  & < 0, \notag \\
  z(\phi, \tup{x}_1, \dots, \tup{x}_k) & \geq 0, \quad \forall \phi \in \Gamma, \tup{x}_i \in \feas(\phi).
  \label{eq:killing2}
\end{align}

First, assume that $f \not\in \supp(\Gamma)$ so that (\ref{eq:killing2}) has a feasible solution $z$.
Note that by scaling we may assume that $z$ is integral.
Then, $z$ can then be interpreted as an instance $I_f$ of VCSP$(\Gamma)$
in which we take as variables the $k$-tuples of $D$, $V = D^k$, and let
\[
I_f(\tup{x}) = \sum_{\phi \in \Gamma, \tup{x}_i \in \feas(\phi)}
z(\phi, \tup{x}_1, \dots, \tup{x}_k)
\phi((\tup{x}_1[1],\dots,\tup{x}_k[1]), \dots, (\tup{x}_1[ar(\phi)],\dots,\tup{x}_k[ar(\phi)])),
\]
where $\tup{x}$ is a list of the variables in $V$, and
the multiplication by $z$ is represented as taking the corresponding constraint with multiplicity $z$.
According to (\ref{eq:killing2}), 
any projection $\proj^{(k)}_i : D^k \to D$, $\proj^{(k)}_i(x_1,\dots,x_k) = x_i$ is an optimal assignment to $I_f$.
Interpreted as tuples, we therefore have $\proj^{(k)}_i \in \opt(I)$ for $1 \leq i \leq k$.
On the other hand, (\ref{eq:killing2}) states that $f$ is not an optimal assignment,
so $f(\proj^{(k)}_1, \dots, \proj^{(k)}_k) \not\in \opt(I_f)$.
In other words, $f \not\in \pol(\opt(I_f))$.

For the opposite direction, assume that $f \in \supp(\Gamma)$,
so that (\ref{eq:killing2}) is unsatisfiable.
Let $I$ be an arbitrary instance of VCSP$(\Gamma)$, and let
$\sigma_1, \dots, \sigma_k \in \opt(I)$ be $k$ optimal solutions to $I$.
Construct an instance $Z$ of VCSP$(\Gamma)$ with variables $D^k$ by replacing each
valued constraint $\phi_i(\tup{x}_i)$ in $I$ by $\phi_i(\sigma_1(\tup{x}_i),\dots,\sigma_k(\tup{x}_i))$,
in $Z$,
where $(\sigma_1(\tup{x}_i),\dots,\sigma_k(\tup{x}_i))$ is a tuple of variables in $(D^k)^{ar(\phi_i)}$.
Now, if $f$ were not an optimal solution to $Z$, then $Z$ would be a solution to (\ref{eq:killing2}),
a contradiction.
Hence $f \in \pol(\opt(I))$.
Since $I$ and $\sigma_i$ were chosen arbitrarily, this establishes the lemma.
\end{proof}

\subsection{Cores and Constants}

\begin{definition}
Let $\Gamma$ be a valued constraint language with domain $D$ and let $S \subseteq D$.
The \emph{sub-language $\Gamma[S]$ of $\Gamma$ induced by $S$} is the valued constraint
language defined on domain $S$ and containing the restriction of every weighted relation
$\phi\in\Gamma$ onto $S$.
\end{definition}

\begin{definition}\label{def:core}
A valued constraint language $\Gamma$ is \emph{a core} if all unary operations in
$\supp(\Gamma)$ are bijections.
A valued constraint language $\Gamma'$ is a \emph{core of $\Gamma$} if $\Gamma'$ is a
core and $\Gamma' = \Gamma[f(D)]$ for some $f \in \supp(\omega)$ with $\omega$
a unary fractional polymorphism of $\Gamma$.
\end{definition}

The following lemma implies that when studying the computational complexity of a
valued constraint language $\Gamma$‚ way may assume that $\Gamma$ is a core.

\begin{lemma}\label{lem:core}
  Let $\Gamma$ be a valued constraint language and $\Gamma'$ a core of $\Gamma$.
  Then, for all instances $I$ of VCSP$(\Gamma)$ and $I'$ of VCSP$(\Gamma')$,
  where $I'$ is obtained from $I$ by
  substituting each function in $\Gamma$ for its restriction in $\Gamma'$,
  the optimum of $I$ and $I'$ coincide.
\end{lemma}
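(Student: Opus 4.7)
My plan is to establish $\opt(I) = \opt(I')$ by proving both inequalities separately. The inequality $\opt(I) \leq \opt(I')$ is essentially automatic: any assignment $V \to f(D)$ feasible for $I'$ is also an assignment $V \to D$, and since the constraints of $I'$ are the restrictions of those of $I$ to $f(D)$, the two instances take identical values on such assignments. Hence any optimal solution to $I'$ lifts to a solution to $I$ of the same value (and if $I'$ is infeasible then so is $I$).

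For the reverse inequality $\opt(I') \leq \opt(I)$ I may assume $\opt(I) < \infty$, since the other case follows from direction one. The plan is to take an optimal assignment $\sigma : V \to D$ with $I(\sigma) = \opt(I)$ and \emph{fold} it into $f(D)$ by composing with $f$: the assignment $f \circ \sigma$ defined by $(f \circ \sigma)(x) = f(\sigma(x))$ takes values in $f(D)$, so it is a candidate solution for $I'$. It therefore suffices to show $I(f \circ \sigma) = \opt(I)$, which I will derive from the unary fractional polymorphism $\omega \in \fpol(\Gamma)$ with $f \in \supp(\omega)$ supplied by Definition~\ref{def:core}.

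Concretely, for each constraint $\phi_i(\tup{x}_i)$ of $I$ we have $\sigma(\tup{x}_i) \in \feas(\phi_i)$, so the inequality of Definition~\ref{def:wp} with $k = 1$ gives $\E_{g \sim \omega}[\phi_i(g(\sigma(\tup{x}_i)))] \leq \phi_i(\sigma(\tup{x}_i))$. Summing over the constraints of $I$ and invoking linearity of expectation yields $\E_{g \sim \omega}[I(g \circ \sigma)] \leq I(\sigma) = \opt(I)$. Since $\sigma$ is optimal, $I(g \circ \sigma) \geq \opt(I)$ for every unary $g$, so an average of at most $\opt(I)$ forces the equality $I(g \circ \sigma) = \opt(I)$ for every $g \in \supp(\omega)$. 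Specialising to $g = f$ and recalling that $f \circ \sigma$ lands in $f(D)$ produces a witness with $I'(f \circ \sigma) = I(f \circ \sigma) = \opt(I)$, so $\opt(I') \leq \opt(I)$.

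The only step I expect to require any care is the feasibility bookkeeping behind the summation and averaging: each $\phi_i(g(\sigma(\tup{x}_i)))$ must be finite for $g \in \supp(\omega)$, which follows because otherwise $\E_{g \sim \omega}[\phi_i(g(\sigma(\tup{x}_i)))] = \infty$ would contradict its bound by the finite quantity $\phi_i(\sigma(\tup{x}_i))$. Beyond this, the argument is a one-line averaging and I do not anticipate any genuine obstacle.
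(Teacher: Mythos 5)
Your proof is correct, and it differs from the paper's in how the key step is justified. The overall skeleton is the same: the direction $\opt(I)\leq\opt(I')$ is the trivial observation that assignments into $f(D)$ have equal value in $I$ and $I'$, and the substantive direction folds an optimal assignment $\sigma$ of $I$ through $f$ and argues that $f\circ\sigma$ is still optimal for $I$. The paper gets this last fact by citing Lemma~\ref{lem:killing} (every operation in $\supp(\Gamma)$ preserves $\opt(I)$), whose proof goes through Farkas' lemma and is stated only for languages of finite size; you instead prove the instance-specific fact directly from Definition~\ref{def:wp} with $k=1$, summing the inequality $\E_{g\sim\omega}[\phi_i(g(\sigma(\tup{x}_i)))]\leq\phi_i(\sigma(\tup{x}_i))$ over the constraints and using optimality of $\sigma$ to force $I(g\circ\sigma)=\opt(I)$ for every $g\in\supp(\omega)$, in particular for $g=f$. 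Your route is more elementary and self-contained: it needs only the definition of a unary fractional polymorphism, it handles the feasibility/infinity bookkeeping explicitly, and it does not rely on the finite-size hypothesis of Lemma~\ref{lem:killing} (which the statement of Lemma~\ref{lem:core} does not assume, so the paper's appeal technically requires first passing to the finite sub-language of weighted relations occurring in $I$). What the paper's route buys is brevity and a stronger reusable fact, namely that \emph{every} operation in $\supp(\Gamma)$ is a polymorphism of $\opt(I)$, which is needed again elsewhere (e.g.\ in the conservative dichotomy); your averaging argument recovers exactly the special case needed here.
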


A special case of Lemma~\ref{lem:core} for finite-valued constraint languages
was proved by the authors in~\cite{tz13:stoc}. Lemma~\ref{lem:core} has also been
observed in~\cite{Ochremiak14:algebraic} and in another recent paper of the
authors~\cite{tz15:necessary}.

\begin{proof}
By definition, $\Gamma' = \Gamma[f(D)]$,
where $D$ is the domain of $\Gamma$ and $f \in \supp(\omega)$
for some unary fractional polymorphism $\omega$.
Assume that $I$ is satisfiable, and let $\sigma$ be an optimal assignment to $I$.
Now $f \circ \sigma$ is a satisfying assignment to $I'$, and by Lemma~\ref{lem:killing}, 
$f \circ \sigma$ is also an optimal assignment to $I$.
Conversely, any satisfying assignment to $I'$ is a satisfying assignment to $I$
of the same value.
\end{proof}

Let $\mathcal{C}_D = \{ \{(d)\} \mid d \in D \}$ be the set of constant unary relations on the set $D$.

\begin{lemma}[\hspace*{-0.4em}\cite{Ochremiak14:algebraic}]\label{lem:constants}
  Let $\Gamma$ be a core valued constraint language.
  The problems VCSP$(\Gamma)$ and VCSP$({\Gamma \cup \mathcal{C}_D})$
  are polynomial-time equivalent.
\end{lemma}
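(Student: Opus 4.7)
The direction $\VCSP(\Gamma) \leq_P \VCSP(\Gamma \cup \mathcal{C}_D)$ is immediate, since every instance of the former is already an instance of the latter.

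For the reverse direction, given an instance $I$ of $\VCSP(\Gamma \cup \mathcal{C}_D)$, the plan is to build in polynomial time an instance $I'$ of $\VCSP(\Gamma)$ from which an optimum of $I$ can be recovered. Let $V_c \subseteq V$ be the set of variables pinned by the constant constraints of $I$, with consistent assignment $\alpha : V_c \to D$ (if some variable carries two conflicting constant constraints, $I$ is infeasible and we output a trivially infeasible instance). Let $I_0$ be $I$ with the constant constraints removed, so $I_0$ is already an instance of $\VCSP(\Gamma)$. Set $G := \supp(\Gamma) \cap \ops_D^{(1)}$; since $\Gamma$ is a core, $G$ consists of bijections and, as a composition-closed subset of the finite group $\operatorname{Sym}(D)$, is a subgroup.

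The key observation I would establish first is that every $g \in G$ acts as an automorphism of every $\phi \in \Gamma$, i.e., $\phi(g(\tup{y})) = \phi(\tup{y})$ for all $\tup{y} \in \feas(\phi)$. Indeed, pick $\omega \in \fpol(\Gamma)$ of arity $1$ with $\omega(g) > 0$; by the core property, $\supp(\omega) \subseteq G$. Summing the fractional-polymorphism inequality $\sum_h \omega(h)\phi(h(\tup{y})) \leq \phi(\tup{y})$ over all $\tup{y}$ ranging through a single $G$-orbit yields equality on both sides (each $h \in G$ permutes the orbit), which forces equality term-by-term; a short linear-algebra argument on the cyclic subgroup $\langle g \rangle$ then yields that $\phi$ is constant along $\langle g \rangle$-cycles, hence $g$-invariant. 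Consequently $I_0(g \circ \sigma) = I_0(\sigma)$ for every assignment $\sigma$ and every $g \in G$.

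Given this invariance, the plan is to build a polynomial-size $\VCSP(\Gamma)$ gadget $J_\alpha$ whose feasible assignments, restricted to $V_c$, are exactly the orbit $G \cdot \alpha \subseteq D^{V_c}$. The combined instance $I' := I_0 + J_\alpha$ is solved with the $\VCSP(\Gamma)$ oracle; any optimum $\sigma^*$ satisfies $\sigma^*|_{V_c} = g \circ \alpha$ for some $g \in G$, and then $\sigma := g^{-1} \circ \sigma^*$ satisfies all constant constraints of $I$ and, by the $G$-invariance of $I_0$, has the same cost as $\sigma^*$, hence is optimal for $I$. The main obstacle is the construction of $J_\alpha$ itself: the orbit $G \cdot \alpha$ is preserved coordinate-wise by every $f \in \supp(\Gamma)$, since $f(g_1 \circ \alpha, \dots, g_k \circ \alpha) = (f[g_1,\dots,g_k]) \circ \alpha$ with $f[g_1,\dots,g_k] \in G$ by Lemma~\ref{lem:suppclone} and the core property; turning this $\supp(\Gamma)$-invariance into an explicit polynomial-size $\VCSP(\Gamma)$ gadget expressing $G \cdot \alpha$ is the technical heart of the argument in~\cite{Ochremiak14:algebraic}.
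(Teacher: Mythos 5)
Your preliminary observations are essentially sound: for $\Gamma$ of finite size the unary part of $\supp(\Gamma)$ is a group of bijections (Lemma~\ref{lem:suppclone}), and the orbit-averaging argument does show that these bijections preserve the values of every $\phi\in\Gamma$, hence that the objective of $I_0$ is invariant under post-composition with any $g\in G$. (Note, though, that the lemma as stated carries no finite-size hypothesis, while Lemma~\ref{lem:suppclone} does.) The genuine gap is that the step which actually carries the reduction --- constructing a polynomial-size $\VCSP(\Gamma)$ gadget $J_\alpha$ simulating the constant constraints --- is not given; you explicitly defer it to~\cite{Ochremiak14:algebraic}. Be aware that the paper itself does not prove this lemma either: it states it with a citation, remarking that the finite-valued special case was proved in~\cite{tz13:stoc} building on~\cite{hkp14:sicomp} and that the general case ``can be proved similarly''. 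So there is no in-paper argument to match, but as a standalone proof your attempt stops exactly at the crucial point.

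Moreover, the formulation you chose for the missing gadget cannot work as stated. You ask for a $\VCSP(\Gamma)$ gadget whose \emph{feasible} assignments, restricted to $V_c$, are exactly the orbit $G\cdot\alpha$. Feasible sets of $\Gamma$-instances (and their projections) are invariant under all of $\pol(\Gamma)$, not merely under $\supp(\Gamma)$, and $G\cdot\alpha$ need not be $\pol(\Gamma)$-invariant. The failure is stark for finite-valued core languages, e.g.\ $\{\phi_{\sf xor}\}$ from Example~\ref{ex:maxcut}: there every assignment to every instance is feasible, so no feasibility gadget can pin anything, yet the lemma still applies (polynomial-time equivalence is claimed for all cores, tractable or not). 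The reduction must therefore enforce the orbit through \emph{optimality} rather than feasibility: one builds an indicator-style instance whose optimal assignments are controlled via the characterisation in Lemma~\ref{lem:killing} (optimal solutions versus $\supp(\Gamma)$), and then simulates the resulting crisp constraint $\opt(I)$ inside $\VCSP(\Gamma)$ by a large-multiplicity/threshold argument exactly as in Lemma~\ref{lem:addopt}. Without such an optimality-based construction (or the argument of~\cite{Ochremiak14:algebraic} that you cite in its place), your reduction does not go through.
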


A special case of Lemma~\ref{lem:constants} for finite-valued constraint
languages was proved by the authors in~\cite{tz13:stoc}, building
on~\cite{hkp14:sicomp}, and Lemma~\ref{lem:constants} can be proved similarly;
we refer the reader to~\cite{Ochremiak14:algebraic}.

\section{Sherali-Adams Relaxations and Valued Relational Width}

In this section, we state and prove our main result on the applicability of 
Sherali-Adams relaxations to VCSPs.
First, we define some notions concerning \emph{bounded relational width}
which is the basis for our proof.

We write $(S,C)$ for (valued) constraints that involve (unweighted)
relations, where $S$ is the scope and $C$ is the constraint relation. 
For a tuple $\tup{x} \in D^S$, we denote by $\pi_{S'}(\tup{x})$ its projection onto $S' \subseteq S$.
For a constraint $(S,C)$, we define $\pi_{S'}(C) = \{ \pi_{S'}(\tup{x}) \mid \tup{x} \in C \}$.

Let $1 \leq k \leq \ell$ be integers.
The following definition is equivalent\footnote{The two requirements
in~\cite{Barto14:jloc} are: for every $S\subseteq V$ with $|S|\leq\ell$ we have
$S\subseteq S_i$ for some $1\leq i\leq q$; and for every set $W\subseteq V$ with
$|W|\leq k$ and every $1\leq i,j\leq q$ with $W\subseteq S_i$ and $W\subseteq
S_j$ we have $\pi_{W}(C_i)=\pi_{W}(C_j)$.} to the definition of $(k,\ell)$-minimality for CSP instances
given in~\cite{Barto14:jloc}.

\begin{definition}
A CSP-instance $J = (V, D, \{(S_i, C_i)\}_{i=1}^q)$ is said to be \emph{$(k,\ell)$-minimal} if:
\begin{itemize}
\item
For every $S \subseteq V$, $\left|S\right| \leq \ell$, there exists $1 \leq i \leq q$ such that $S = S_i$.
\item
For every $i, j \in \left[q\right]$ such that $\left| S_j \right| \leq k$ and $S_j \subseteq S_i$,
$C_j = \pi_{S_j}(C_i)$.
\end{itemize}
\end{definition}

There is a straightforward polynomial-time algorithm for finding an equivalent $(k,\ell)$-minimal instance~\cite{Barto14:jloc}.
This leads to notion of \emph{relational width}:

\begin{definition}
A constraint language $\Delta$ has relational width $(k,\ell)$ if, for every instance
$J\in\CSP(\Delta)$, an equivalent $(k,\ell)$-minimal instance is non-empty if,
and only if,
$J$ has a solution.
\end{definition}

A $k$-ary idempotent operation $f:D^k\to D$ is called a 
\emph{weak near-unanimity} (WNU) operation if, for all $x,y\in D$,
\[
f(y,x,x,\ldots,x)=f(x,y,x,x,\ldots,x)=f(x,x,\ldots,x,y)\,.
\]

\begin{definition}[BWC]
We say that a clone of operations satisfies the \emph{bounded width condition (BWC)}
if it contains WNU operations of all but finitely many arities.
\end{definition}

\begin{theorem}[\hspace*{-0.3em}\cite{Barto14:jacm,LaroseZadori07:au}]\label{thm:boundedrelwidth}
Let $\Delta$ be a constraint language of finite size containing all constant unary relations.
Then, $\Delta$ has bounded relational width if, and only if, $\pol(\Delta)$ satisfies the BWC.
\end{theorem}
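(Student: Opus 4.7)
The plan is to prove both directions of this known dichotomy using the algebraic CSP toolkit. The forward direction (necessity of BWC) goes back to Larose--Z\'adori, while the converse is the main result of Barto--Kozik. I sketch each half.

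For the necessity direction, I would argue the contrapositive. If $\pol(\Delta)$ lacks WNU operations of infinitely many arities, then by the classification of idempotent algebras without Taylor-type terms of every arity (Mar\'oti--McKenzie, Siggers), $\pol(\Delta)$ must admit, in some quotient, an essentially affine structure tied to a non-trivial abelian group. Because $\Delta$ is of finite size and contains all constants, pp-definitions over $\Delta$ correspond exactly to invariant relations of $\pol(\Delta)$, so one can pp-interpret the graph of addition in that abelian group. This yields a pp-interpretation, inside $\Delta$, of the CSP for systems of linear equations modulo some prime $p$. Such systems have unbounded relational width (via the path-based gadget of Atserias--Bulatov--Dalmau), and pp-interpretations preserve bounded relational width, giving the required contradiction.

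For the sufficiency direction, I would follow Barto--Kozik: assume $J \in \CSP(\Delta)$ is $(2,3)$-minimal and non-empty, and construct a solution. Fix a WNU $w \in \pol(\Delta)$ of suitable arity. Iteratively shrink domains: at each stage, look for a variable $v$ whose current domain $D_v$ admits a proper subalgebra $B_v \subsetneq D_v$ that $w$-absorbs $D_v$ (so $w$ maps any tuple of $D_v$ with at least one coordinate in $B_v$ into $B_v$). Replace $D_v$ by $B_v$, restrict all constraints accordingly, and re-establish $(2,3)$-minimality. The crucial lemma is that this restriction remains non-empty: absorption at $v$ propagates through the $(2,3)$-consistent constraint network so that every pairwise projection retains a non-empty image. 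Iterate until no variable admits a proper absorbing subalgebra. In this terminal state each $D_v$ is linked and absorption-free, and the structure theory for such algebras (Bulatov's coloured-graph analysis of Taylor algebras) allows an inductive pair-by-pair extension producing a global solution.

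The main obstacle, as expected, is the absorption-propagation lemma in the sufficiency direction: it requires the \emph{loop lemma} of Barto--Kozik (any sufficiently connected symmetric digraph admitting a WNU polymorphism contains a loop), together with delicate use of the linkedness inherent in $(2,3)$-minimal instances. A secondary technical point is that re-establishing $(2,3)$-minimality after a restriction must not introduce hidden inconsistencies; this is precisely where the parameters $(2,3)$, rather than mere arc-consistency, are essential. For the necessity half, the subtle point is that the pp-interpretation of linear equations needs access to constants, which is why the hypothesis that $\Delta$ contains all constant unary relations is required (without it one only recovers the condition on the \emph{core}).
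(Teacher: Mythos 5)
The first thing to say is that the paper does not prove this statement at all: Theorem~\ref{thm:boundedrelwidth} is imported as a black box from Larose--Z\'adori and Barto--Kozik, and the paper's own contribution (Theorem~\ref{thm:maincons}) is built on top of it. So there is no internal proof to compare yours against, and what you have written is a roadmap of the external literature rather than a proof. As a roadmap it is broadly faithful: necessity is indeed the Larose--Z\'adori argument (failure of the BWC for an idempotent clone yields, via the tame-congruence/Taylor-term analysis, an affine or unary type in the generated variety, hence a simulation of linear equations over a nontrivial abelian group, which have unbounded relational width, and width is preserved under the relevant reductions), and sufficiency is the Barto--Kozik absorption argument. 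But the mathematical substance is entirely delegated: the absorption-propagation step, the loop lemma, and the analysis of the absorption-free terminal state are precisely the long, hard core of the Barto--Kozik paper, and the classification underlying the necessity half is likewise invoked, not proved. As a blind proof attempt this therefore has large gaps, even though as a summary of where the proof lives it is essentially correct.

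Two technical points are worth flagging. First, your sufficiency sketch works directly with $(2,3)$-minimal instances, so it is really aiming at the stronger collapse statement, which in this paper is a separate result (Theorem~\ref{thm:libor23}, from Barto's later paper on the collapse of the bounded width hierarchy); Theorem~\ref{thm:boundedrelwidth} only asserts bounded relational width, i.e.\ solvability by $(k,\ell)$-minimality for \emph{some} fixed parameters, and the Barto--Kozik proof of that statement does not by itself give $(2,3)$. Conflating the two hides an extra layer of difficulty (and the paper is careful to cite them separately). Second, in the necessity direction, the negation of the BWC gives either an affine (type 2) or a unary (type 1) obstruction; only the former is literally ``linear equations modulo $p$'', and the type 1 case needs its own short argument in Larose--Z\'adori. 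Your closing remark about constants is on target: they enforce idempotency of $\pol(\Delta)$, which is what licenses the passage to subalgebras and quotients in both halves.
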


\begin{theorem}[\hspace*{-0.3em}\cite{Barto14:jloc}]\label{thm:libor23}
Let $\Delta$ be a constraint language.
If $\Delta$ has bounded relational width, then it has relational width $(2,3)$.
\end{theorem}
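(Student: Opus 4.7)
My plan is to combine Theorem~\ref{thm:boundedrelwidth} with the absorption-theoretic machinery of Barto and Kozik. A preliminary reduction allows us to assume that $\Delta$ is a core containing all unary constant relations $\mathcal{C}_D$: restricting to a core and adding constants preserves bounded relational width in both directions by standard pp-definability arguments, and they cannot improve the width past $(2,3)$. After this reduction, Theorem~\ref{thm:boundedrelwidth} yields that $\pol(\Delta)$ satisfies the BWC, i.e., contains weak near-unanimity operations of all but finitely many arities. The remaining task is to show that every nonempty $(2,3)$-minimal instance $J$ of $\CSP(\Delta)$ admits a solution, which I would prove by induction on $\sum_{v \in V}|D_v|$, where $D_v \subseteq D$ is the unary constraint on $\{v\}$ (which exists by $(2,3)$-minimality).

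The inductive step splits into two cases. If every $D_v$ is a singleton $\{d_v\}$, then $(2,3)$-minimality forces the assignment $v \mapsto d_v$ to be a global solution: each constraint $(S_i,C_i)$ agrees with its pair and triple projections, which in turn agree with the singleton unary constraints, forcing $(d_v)_{v\in S_i}\in C_i$. Otherwise, pick $v$ with $|D_v|\geq 2$ and attempt to restrict $D_v$ to a proper \emph{absorbing subuniverse} $B\subsetneq D_v$ (preserved by the available WNU polymorphisms and absorbing in the Barto--Kozik sense). One then re-establishes $(2,3)$-minimality by propagating the restriction through all binary and ternary projections; applying the WNU polymorphisms coordinatewise shows that nonemptiness of each local constraint survives the propagation, so the restricted instance remains $(2,3)$-minimal and nonempty, and the inductive hypothesis supplies a solution.

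The main obstacle is the case in which \emph{no} variable admits a proper absorbing subuniverse — the absorption-free case. Here one has to produce a solution directly, and this is where the brunt of Barto's argument in~\cite{Barto14:jloc} lies. The idea is that absorption-freeness, together with $(2,3)$-minimality and BWC polymorphisms, forces the binary projections $\pi_{\{u,v\}}(C_i)$ to form a linked, pp-consistent system — a \emph{Prague instance} — and Prague instances over BWC algebras are known to be solvable by a path-theoretic / absorption argument. This is the step where the parameters $(2,3)$ are used sharply: ternary closure of binary projections is exactly what is needed to certify linkedness under the WNU polymorphisms, whereas lower parameters would leave room for local inconsistencies that block the construction. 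Once this structural dichotomy (absorb or Prague) is in hand, termination and bookkeeping for the inductive argument are routine.
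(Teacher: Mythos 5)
First, note that the paper does not prove this statement at all: Theorem~\ref{thm:libor23} is imported verbatim from Barto~\cite{Barto14:jloc}, so there is no internal proof to compare your argument with; the only question is whether your sketch would stand on its own as a proof of Barto's theorem. It would not. The decisive step is missing: after your case split you write that in the absorption-free case the binary projections form a Prague instance and that ``Prague instances over BWC algebras are known to be solvable,'' and you say explicitly that this is where the brunt of Barto's argument lies. But the solvability of such weakly consistent ((2,3)-minimal/Prague-type) instances over bounded-width algebras \emph{is} the content of the theorem being proved --- it is exactly what distinguishes \cite{Barto14:jloc} from the earlier Barto--Kozik result \cite{Barto14:jacm}, which needed a stronger consistency notion. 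Invoking it as known makes the argument circular: you have reduced the theorem to itself. Similarly, in the absorbing case the claim that ``nonemptiness of each local constraint survives the propagation'' after restricting $D_v$ to an absorbing subuniverse and re-establishing $(2,3)$-minimality is precisely where the absorption machinery has to do real work (linkedness, the absorption theorem, control of how minimality propagation interacts with $B$); asserting it by ``applying the WNU polymorphisms coordinatewise'' is not a proof.

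There are also gaps in your preliminary reduction. The theorem is stated for an arbitrary constraint language $\Delta$, whereas Theorem~\ref{thm:boundedrelwidth} as used in this paper requires $\Delta$ to be of finite size and to contain all constant unary relations; passing to a core and adding constants preserves \emph{bounded} relational width, but you also need that relational width exactly $(2,3)$ transfers back from $\Gamma'\cup\mathcal{C}_{D'}$ to $\Delta$, i.e., that a nonempty $(2,3)$-minimal instance over $\Delta$ can be converted into a nonempty $(2,3)$-minimal instance over the core-with-constants language without losing solutions. This is true but requires an argument (and finiteness), not just the phrase ``standard pp-definability arguments,'' since pp-constructions in general do not preserve the specific parameters $(k,\ell)$. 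In short, your write-up is a reasonable roadmap of how such a proof is structured, but every step that carries actual weight is either deferred to \cite{Barto14:jloc} or asserted without justification, so it cannot be accepted as a proof of the statement.
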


Let $I(x_1,\dots,x_n) = \sum_{i=1}^q \phi_i(S_i)$ be an instance of the VCSP,
where $S_i \subseteq V = \{x_1, \dots, x_n\}$ and $\phi_i \colon
D^{\left|S_i\right|} \to \qq$.
First, we make sure that every non-empty $S \subseteq V$ with $|S| \leq \ell$
appears in some term $\phi_i(S)$,
possibly by adding constant-0 weighted relations.
The Sherali-Adams~\cite{Sherali1990} linear programming relaxation with
parameters $(k,\ell)$ is defined as follows.
The variables are $\lambda_{i}(\tup{s})$ for every $i \in \left[q\right]$ and tuple $\tup{s} \in D^{S_i}$.

\begin{align}
\min \sum_{i = 1}^q \sum_{\tup{s} \in \feas(\phi_i)} \lambda_{i}(\tup{s}) \phi_i(\tup{s}) \nonumber\\
\lambda_{j}(\tup{t}) &= \sum_{\tup{s} \in D^{S_i}, \pi_{S_j}(\tup{s}) = \tup{t}} \lambda_i(\tup{s}) &
   \forall i, j \in \left[q\right] \text{ s.t. } S_j \subseteq S_i, \left|S_j\right| \leq k, \tup{t} \in D^{S_j} \label{sa:marginal} \\
\sum_{\tup{s} \in D^{S_i}} \lambda_{i}(\tup{s}) &= 1 &
   \forall i\in\left[q\right] \label{sa:sum1} \\
\lambda_{i}(\tup{s}) &= 0 &
   \forall i \in \left[q\right], \tup{s} \not\in \feas(\phi_i) \label{sa:infeas} \\
\lambda_{i}(\tup{s}) &\geq 0 &
   \forall i \in \left[q\right], \tup{s} \in D^{S_i} \label{sa:nonnegative}
\end{align}

The SA$(k,\ell)$ optimum is always less than or equal to the VCSP optimum,
hence the program is a relaxation.
In anticipation of our main theorem, we make the following definition.

\begin{definition}
A valued constraint language $\Gamma$ has \emph{valued relational width
$(k,\ell)$} if,
for every instance $I$ of $\VCSP(\Gamma)$,
if the SA$(k,\ell)$-relaxation of $I$ has a feasible solution,
then its optimum coincides with the optimum of $I$.
\end{definition}

For a feasible solution $\lambda$ of SA$(k,\ell)$, let
$\supp(\lambda_i) = \{ \tup{s} \in D^{S_i} \mid \lambda_i(\tup{s}) > 0 \}$.

\begin{lemma}\label{lem:fullsupport}
Let $I$ be an instance of VCSP$(\Gamma)$.
Assume that SA$(k,\ell)$ for $I$ is feasible.
Then, there exists an optimal solution $\lambda^*$ to SA$(k,\ell)$ such that,
for every $i$,
$\supp(\lambda^*_i)$ is closed under 
every operation in $\supp(\Gamma)$.
\end{lemma}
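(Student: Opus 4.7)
The strategy is to construct $\lambda^*$ as an optimal solution of \emph{maximal support} and then exploit the fractional polymorphisms of $\Gamma$ to force the required closure via a symmetrisation operation on SA$(k,\ell)$ solutions. First, I would pick $\lambda^*$ to be any point in the relative interior of the face of optima of the SA$(k,\ell)$-polytope; this exists because the feasible region is a polytope and the set of optima is a non-empty face, and the choice ensures that $\supp(\lambda'_i) \subseteq \supp(\lambda^*_i)$ for every $i$ and every other optimum $\lambda'$ (a standard fact about polytopes).

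Next, given any $r$-ary $\omega \in \fpol(\Gamma)$ and any feasible solution $\lambda$, I would define the symmetrised solution
\[
\tilde{\lambda}_i(\tup{s}) \;=\; \sum_{\tup{s}_1,\ldots,\tup{s}_r \in D^{S_i}} \left(\prod_{n=1}^r \lambda_i(\tup{s}_n)\right) \sum_{\substack{g \in \supp(\omega) \\ g(\tup{s}_1,\ldots,\tup{s}_r) = \tup{s}}} \omega(g),
\]
and verify that it is another feasible solution whose objective does not exceed that of $\lambda$. Non-negativity and (\ref{sa:sum1}) are immediate because $\omega$ is a probability distribution; constraint (\ref{sa:infeas}) holds because $\supp(\omega) \subseteq \pol(\phi_i)$. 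For the marginal constraints (\ref{sa:marginal}), the key identity is $\pi_{S_j}(g(\tup{s}_1,\ldots,\tup{s}_r)) = g(\pi_{S_j}(\tup{s}_1),\ldots,\pi_{S_j}(\tup{s}_r))$, valid because $g$ acts coordinatewise; this lets one regroup the sum defining $\sum_{\tup{s}:\pi_{S_j}(\tup{s}) = \tup{t}} \tilde{\lambda}_i(\tup{s})$ by the projections $\tup{t}_n = \pi_{S_j}(\tup{s}_n)$, collapse each inner marginal via (\ref{sa:marginal}) for $\lambda$ itself, and recognise the result as $\tilde{\lambda}_j(\tup{t})$. The objective bound follows from (\ref{eq:wpol}) applied pointwise and $\sum_{\tup{s}_n} \lambda_i(\tup{s}_n) = 1$; in particular, if $\lambda$ is optimal then so is $\tilde{\lambda}$.

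To finish, let $f \in \supp(\Gamma)$ be arbitrary and pick $\omega \in \fpol(\Gamma)$ with $\omega(f)>0$. For any $\tup{s}_1,\ldots,\tup{s}_r \in \supp(\lambda^*_i)$ we have $\prod_n \lambda^*_i(\tup{s}_n) > 0$ and $\omega(f) > 0$, so the defining formula gives $\tilde{\lambda}^*_i(f(\tup{s}_1,\ldots,\tup{s}_r)) > 0$. Because $\tilde{\lambda}^*$ is also optimal and $\lambda^*$ was chosen in the relative interior of the optimal face, we must have $\supp(\tilde{\lambda}^*_i) \subseteq \supp(\lambda^*_i)$, so $f(\tup{s}_1,\ldots,\tup{s}_r) \in \supp(\lambda^*_i)$, establishing closure under every $f \in \supp(\Gamma)$ simultaneously for all $i$. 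I expect the main obstacle to be the marginal verification: although it is ``just'' a regrouping of sums, it is the only step that combines the coordinatewise action of $g$ with the $(k,\ell)$-marginal structure, and must be carried out cleanly so as not to accidentally require $|S_j| \leq k$ somewhere it is not available.
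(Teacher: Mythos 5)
Your proposal is correct, and its core construction is exactly the paper's: your symmetrised solution $\tilde{\lambda}$ is precisely the paper's $\lambda^\omega$, defined as the pushforward of $\omega$ times $r$ independent samples from $\lambda_i$, and your verification of feasibility (the marginal regrouping via the coordinatewise action of $g$, using (\ref{sa:marginal}) only for $|S_j|\leq k$, where it is available) and of the objective bound via (\ref{eq:wpol}) matches the paper's chain of equalities and inequalities. The only genuine difference is the finishing step. The paper takes an arbitrary optimal $\lambda$, observes that if $\supp(\lambda_i)$ is not closed under some $f\in\supp(\omega)$ then $\lambda'=\frac{1}{2}(\lambda+\lambda^\omega)$ is optimal with strictly larger support, and iterates; finiteness of each $D^{S_i}$ guarantees termination at the desired $\lambda^*$. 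You instead fix $\lambda^*$ once and for all as a relative-interior point of the optimal face, which has maximal support among all optima (any optimum $\lambda'$ can be written as a proper convex combination through $\lambda^*$, so $\supp(\lambda'_i)\subseteq\supp(\lambda^*_i)$); then a single symmetrisation per operation $f$, with any $\omega\in\fpol(\Gamma)$ having $\omega(f)>0$, forces $f(\tup{s}_1,\dots,\tup{s}_r)\in\supp(\tilde{\lambda}^*_i)\subseteq\supp(\lambda^*_i)$. Your route trades the paper's elementary iteration for a standard convexity fact, and implicitly uses that the optimum is attained and the optima form a nonempty face, which is fine here because each $\lambda_i$ ranges over a simplex, so the feasible region is a compact polytope and the objective coefficients are finite on $\feas(\phi_i)$. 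Both arguments are equally valid; yours is marginally slicker, the paper's is more self-contained.
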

\begin{proof}
Let $\omega$ be an arbitrary $m$-ary fractional polymorphism of $\Gamma$,
and let $\lambda$ be any feasible solution $\lambda$ to SA$(k,\ell)$. Define
$\lambda^\omega$ by
\[
\lambda^\omega_i(\tup{s}) = \Pr_{\substack{f \sim \omega \\ \tup{s}_1,\dots,\tup{s}_m \sim \lambda_i}}
[ f(\tup{s}_1,\dots,\tup{s}_m) = \tup{s}].
\]
We show that $\lambda^\omega$ is a feasible solution to SA$(k,\ell)$, and that if $\lambda$ is optimal,
then so is $\lambda^\omega$.

Clearly $\lambda^\omega_i$ is a probability distribution for each $i \in \left[q\right]$,
so (\ref{sa:sum1}) and (\ref{sa:nonnegative}) hold.
Since $\omega$ is a fractional polymorphism of $\Gamma$,
we have $\tup{s} \in \feas(\phi_i)$ 
for any choice of $f \in \supp(\omega)$ and 
$\tup{s}_1, \dots, \tup{s}_m \in \supp(\lambda_i)$.
Hence, $\lambda^\omega_i(\tup{s}) = 0$ for $\tup{s} \not\in \feas(\phi_i)$,
so (\ref{sa:infeas}) holds.

Finally, let $j \in \left[q\right]$ be such that $S_j \subseteq S_i$, $\left|S_j\right| \leq k$,
and let $\tup{t} \in D^{S_j}$. Then,
\begin{align*}
\sum_{\tup{s} \in D^{S_i}, \pi_{S_j}(\tup{s}) = \tup{t}} \lambda^\omega_i(\tup{s}) 
&=
\sum_{\tup{s} \in D^{S_i}, \pi_{S_j}(\tup{s}) = \tup{t}} \Pr_{\substack{f \sim \omega \\ \tup{s}_1,\dots,\tup{s}_m \sim \lambda_i}}
[ f(\tup{s}_1,\dots,\tup{s}_m) = \tup{s}]\\
&=
\Pr_{\substack{f \sim \omega \\ \tup{s}_1,\dots,\tup{s}_m \sim \lambda_i}}
[ \pi_{S_j}(f(\tup{s}_1,\dots,\tup{s}_m)) = \tup{t}]\\
&=
\Pr_{\substack{f \sim \omega \\ \tup{s}_1,\dots,\tup{s}_m \sim \lambda_i}}
[ \tup{t}_1 = \pi_{S_j}(\tup{s}_1) \wedge \dots \wedge \tup{t}_m = \pi_{S_j}(\tup{s}_1) \wedge 
f(\tup{t}_1,\dots,\tup{t}_m) = \tup{t}]\\
&=
\Pr_{\substack{f \sim \omega \\ \tup{t}_1,\dots,\tup{t}_m \sim \lambda_j}}
[ f(\tup{t}_1,\dots,\tup{t}_m) = \tup{t}]\\
&=
\lambda^\omega_j(\tup{t}),
\end{align*}
where, in the penultimate equality, we have used the fact that
(\ref{sa:marginal}) can be read as 
$\lambda_j(\tup{t}) = \Pr_{\tup{s} \sim \lambda_i} \left[\pi_{S_j}(\tup{s}) = \tup{t}\right]$.
It follows that (\ref{sa:marginal}) also holds for $\lambda^\omega$, so $\lambda^\omega$ is feasible.

For each $i \in \left[q\right]$, we have
\begin{align*}
\sum_{\tup{s}\in \feas(\phi_i)} \lambda_i(\tup{s})\phi_i(\tup{s}) &=
\E_{\tup{s} \sim \lambda_i} \phi_i(\tup{s}) = \E_{\tup{s}_1,\dots,\tup{s}_m \sim \lambda_i} \sum_{j=1}^m \phi_i(\tup{s}_j)\\
&\geq \E_{\substack{f \sim \omega\\ \tup{s}_1,\dots,\tup{s}_m \sim \lambda_i}} \phi_i(f(\tup{s}_1,\dots,\tup{s}_m))\\
&= \sum_{s \in \feas(\phi_i)} \Big( \Pr_{\substack{f \sim \omega\\ \tup{s}_1,\dots,\tup{s}_m \sim \lambda_i}} \left[f(\tup{s}_1,\dots,\tup{s}_m) = \tup{s}\right] \Big) \phi_i(\tup{s})\\
&= \sum_{s \in \feas(\phi_i)} \lambda^\omega_i(\tup{s}) \phi_i(\tup{s}).
\end{align*}
Therefore, if $\lambda$ is optimal, then $\lambda^\omega$ must also be optimal.

Now assume that $\lambda$ is an optimal solution and that $\supp(\lambda)$ is not
closed under some operation $f \in \supp(\omega)$ for $\omega \in \fpol(\Gamma)$,
i.e., for some $\tup{s_1}, \dots, \tup{s_m} \in \supp(\lambda)$, we have
$f(\tup{s}_1,\dots,\tup{s}_m) \not\in \supp(\lambda)$.
But note that  $f(\tup{s}_1,\dots,\tup{s}_m) \in \supp(\lambda^\omega_i)$.
Therefore, $\lambda' = \frac{1}{2}(\lambda+\lambda^\omega)$ is an optimal solution
such that $\supp(\lambda_i) \subsetneq \supp(\lambda'_i) \subseteq D^{S_i}$.
For each $i \in \left[q\right]$, $D^{S_i}$ is finite.
Hence, by repeating this procedure, we obtain a sequence of optimal solutions with
strictly increasing support until, after a finite number of steps, we obtain a $\lambda^*$
that is closed under
every operation in $\supp(\Gamma)$.
\end{proof}

\begin{theorem}\label{thm:maincons}
Let $\Gamma$ be a valued constraint language of finite size containing all constant unary relations.
If $\supp(\Gamma)$ satisfies the BWC, then $\Gamma$ has valued relational width $(2,3)$.
\end{theorem}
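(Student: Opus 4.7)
The plan is to leverage the CSP relational-width machinery after isolating the support of an optimal fractional solution, then tie the SA$(2,3)$ value back to the resulting integer solution via LP duality.

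First, I would apply Lemma~\ref{lem:fullsupport} to obtain an optimal SA$(2,3)$ solution $\lambda^*$ whose supports $R_i := \supp(\lambda^*_i)$ are all closed under every operation in $\supp(\Gamma)$. Because $\mathcal{C}_D \subseteq \Gamma$, every operation in $\supp(\Gamma)$ is idempotent (it preserves each unary relation $\{d\}$), so $\supp(\Gamma) \subseteq \pol(\Delta)$ for the finite constraint language $\Delta := \{R_i : 1\le i \le q\} \cup \mathcal{C}_D$. Since $\supp(\Gamma)$ satisfies BWC, so does $\pol(\Delta)$; by Theorem~\ref{thm:boundedrelwidth} the language $\Delta$ has bounded relational width, and by Theorem~\ref{thm:libor23} it has relational width $(2,3)$.

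Next, consider the CSP instance $J$ on $V$ with constraints $(S_i, R_i)_{i=1}^q$. Because every non-empty $S\subseteq V$ with $|S|\le 3$ is assumed to occur as some scope $S_i$, the first $(2,3)$-minimality condition holds; the marginalisation equations~(\ref{sa:marginal}) for $\lambda^*$ give $\pi_{S_j}(R_i) = R_j$ whenever $S_j \subseteq S_i$ and $|S_j|\le 2$, supplying the second. Hence $J$ is a non-empty $(2,3)$-minimal instance, and since $\Delta$ has relational width $(2,3)$, $J$ admits a satisfying assignment $\sigma \colon V \to D$ with $\sigma|_{S_i} \in R_i$ for every $i$.

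It remains to show $I(\sigma) = m^*$, where $m^*$ is the SA$(2,3)$ optimum. Since SA$(2,3)$ is a relaxation, $I(\sigma) \ge m^*$, so it suffices to show the reverse inequality. Here I would invoke LP duality: let $z_i$ be the duals of the normalisation constraints~(\ref{sa:sum1}) and $y_{ij,\tup{t}}$ (unrestricted in sign) the duals of the marginalisation constraints~(\ref{sa:marginal}). Complementary slackness at each $\tup{s} \in R_i$ yields
\[
\phi_i(\tup{s}) \;=\; z_i \;+\; \sum_{\substack{j:\,S_j\subseteq S_i\\|S_j|\le 2}} y_{ij,\,\pi_{S_j}(\tup{s})} \;-\; \sum_{\substack{i':\,S_i\subseteq S_{i'}\\|S_i|\le 2}} y_{i'i,\,\tup{s}}\,.
\]
Summing this identity over $i$ with $\tup{s} = \sigma|_{S_i}$, every dual potential $y_{ij,\sigma|_{S_j}}$ appears once with a plus sign from the $i$-term and once with a minus sign from the $j$-term (with the indices swapped), so the potentials cancel in pairs, leaving $I(\sigma) = \sum_i z_i = m^*$ by strong LP duality.

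I expect the main obstacle to lie in this last step: setting up the dual LP with the correct sign convention on the marginalisation equalities and verifying the telescoping cancellation of the $y$-potentials. Conceptually the argument is a clean composition of tools already in hand -- Lemma~\ref{lem:fullsupport} supplies an invariant support, Theorems~\ref{thm:boundedrelwidth} and~\ref{thm:libor23} convert this support into a consistent integer assignment $\sigma$, and complementary slackness matches the VCSP value of $\sigma$ to the SA optimum.
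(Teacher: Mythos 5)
Your proposal is correct and follows essentially the same route as the paper's proof: Lemma~\ref{lem:fullsupport} to get supports closed under $\supp(\Gamma)$, Theorems~\ref{thm:boundedrelwidth} and~\ref{thm:libor23} applied to the $(2,3)$-minimal CSP instance formed by those supports (together with $\mathcal{C}_D$) to extract a satisfying assignment $\sigma$, and then complementary slackness with the telescoping cancellation of the dual $y$-variables to equate $\sum_i \phi_i(\sigma(S_i))$ with the SA optimum. The final duality step you flag as the main obstacle is exactly the computation the paper carries out in equations~(\ref{eq:dualsum}) and~(\ref{eq:is0}).
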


\begin{proof}
Let $I$ be an instance of VCSP$(\Gamma)$.
The dual of the SA$(k,\ell)$ relaxation can be written in the following form.
The variables are $z_i$ for $i \in \left[q\right]$ and $y_{j, \tup{t}, i}$ for $i, j \in \left[q\right]$ such that $S_j \subseteq S_i$, $\left|S_j\right| \leq k$, and $\tup{t} \in D^{S_j}$.
\begin{align}
\max \sum_{i=1}^q z_i \nonumber\\
z_i &\leq \phi_i(\tup{s}) + \sum_{j \in \left[q\right], S_j \subseteq S_i} y_{j, \pi_{S_j}(\tup{s}), i} - \sum_{j \in \left[q\right], S_i \subseteq S_{j}} y_{i, \tup{s}, j}  & \forall i \in \left[q\right], \left|S_i\right| \leq k, \tup{s} \in \feas(\phi_i) \label{dual:smallk} \\
z_i &\leq \phi_i(\tup{s}) + \sum_{\substack{j \in \left[q\right], S_j \subseteq S_i \\ \left|S_j\right| \leq k}} y_{j,\pi_{S_j}(\tup{s}),i} & \forall i \in \left[q\right], |S_i| > k, \tup{s} \in \feas(\phi_i) \label{dual:largek}
\end{align}

It is clear that if $I$ has a feasible solution, then so does the SA$(k,\ell)$ primal.
Assume that the SA$(2,3)$-relaxation has a feasible solution.
By Lemma~\ref{lem:fullsupport}, there exists an optimal primal solution $\lambda^*$ such that,
for every $i \in \left[q\right]$,
$\supp(\lambda^*_i)$ is closed under $\supp(\Gamma)$.
Let $y^*$, $z^*$ be an optimal dual solution.

Let $\Delta = \{C_i\}_{i=1}^q \cup \{ \mathcal{C}_D \}$, where $C_i = \supp(\lambda^*_i)$, and
consider the instance $J = (V,D,\{(S_i,C_i)\}_{i=1}^q)$ of CSP$(\Delta)$.
We make the following observations:
\begin{enumerate}
\item
By construction of $\lambda^*$, $\supp(\Gamma) \subseteq \pol(\Delta)$, so $\Delta$
contains all constant unary relations and
satisfies the BWC.
By Theorems~\ref{thm:boundedrelwidth} and~\ref{thm:libor23}, the language $\Delta$ has relational width $(2,3)$. 
\item
The constraints (\ref{sa:marginal}) say that if $i, j \in \left[q\right]$,
$\left|S_j\right| \leq 2$ and $S_j \subseteq S_i$, then $\lambda^*_j(\tup{t}) >
0$ (i.e., $\tup{t} \in C_j$) if, and only if, $\sum_{\tup{s} \in D^{S_i}, \pi_{S_j}(\tup{s}) = \tup{t}} \lambda^*_i(\tup{s}) > 0$ (i.e., $\tup{t} \in \pi_{S_j}(C_i)$). In other words, $J$ is $(2,3)$-minimal.
\end{enumerate}

These two observations imply that $J$ has a satisfying assignment $\sigma \colon V \to D$.

By complementary slackness, since $\lambda^*_i(\sigma(S_i)) > 0$ for every $i \in \left[q\right]$, 
we must have equality in the corresponding rows in the dual indexed by $i$ and $\sigma(S_i)$.
We sum these rows over $i$:
\begin{equation}\label{eq:dualsum}
\sum_{i=1}^q z^*_i\ =\ \sum_{i=1}^q \phi_i(\sigma(S_i)) +
\Big( \sum_{i=1}^q \sum_{\substack{j \in \left[q\right], S_j \subseteq S_i \\ \left|S_j\right| \leq 2}} y^*_{j,\pi_{S_j}(\sigma(S_i)),i} -
 \sum_{\substack{i \in \left[q\right]\\ \left|S_i\right| \leq 2}} \sum_{j\in\left[q\right], S_i \subseteq S_j} y^*_{i,\sigma(S_i),j}
 \Big).
\end{equation}

By noting that $\pi_{S_j}(\sigma(S_i)) = \sigma(S_j)$, we can rewrite the expression in parenthesis on
the right-hand side of (\ref{eq:dualsum}) as:
\begin{equation}\label{eq:is0}
 \sum_{\substack{i, j \in \left[q\right], S_j \subseteq S_i\\ \left|S_j\right| \leq 2}} y^*_{j, \sigma(S_j), i} -
 \sum_{\substack{i, j \in \left[q\right], S_j \subseteq S_i\\ \left|S_i\right| \leq 2}} y^*_{j, \sigma(S_j), i}\ =\ 0.
\end{equation}

Therefore,
\[
\sum_{i=1}^q \sum_{\tup{s} \in \feas(\phi_i)} \lambda^*_i(\tup{s}) \phi_i(\tup{s})\ =\ 
\sum_{i=1}^q z^*_i\ =\ \sum_{i=1}^q \phi_i(\sigma(S_i)),
\]
where the first equality follows by strong LP-duality, and the second by (\ref{eq:dualsum}) and
(\ref{eq:is0}).

Since $I$ was an arbitrary instance of VCSP$(\Gamma)$,
we conclude that $\Gamma$ has valued relational width $(2,3)$.
\end{proof}

\section{Generalisations of Known Tractable Languages}

In this section, we give some applications of Theorem~\ref{thm:maincons}.
Firstly, we show that the BWC is preserved by going to a core and the
addition of constant unary relations.

\begin{lemma}\label{lem:coreBWC}
Let $\Gamma$ be a valued constraint language of finite size on domain $D$ 
and $\Gamma'$ a core of $\Gamma$ on domain $D' \subseteq D$.
Then,
$\supp(\Gamma)$ satisfies the BWC if, and only if, $\supp(\Gamma' \cup \mathcal{C}_{D'})$
satisfies the BWC.
\end{lemma}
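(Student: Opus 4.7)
The plan is to exploit the structure $\Gamma' = \Gamma[D']$ with $D' = f(D)$ for some unary $f \in \supp(\omega)$, where $\omega$ is a unary fractional polymorphism of $\Gamma$. I would first normalise $f$ to a retraction onto $D'$: since $\supp(\Gamma)$ is a clone (Lemma~\ref{lem:suppclone}), all iterates $f^N$ lie in $\supp(\Gamma)$, and because $\Gamma'$ is a core the restriction $f|_{D'}$ is a bijection of the finite set $D'$; hence some iterate $f^N$ fixes $D'$ pointwise, so we may assume $f(d) = d$ for every $d \in D'$.

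For the forward direction, given a $k$-ary WNU $g \in \supp(\Gamma)$ I would set $\tilde g := (f \circ g)|_{D'^k}$. Clone closure yields $f \circ g \in \supp(\Gamma)$, and its image lies in $D'$ because $f(D) = D'$. The operation $\tilde g$ inherits the WNU identities from $g$ and is idempotent on $D'$, since $\tilde g(x,\ldots,x) = f(g(x,\ldots,x)) = f(x) = x$ for $x \in D'$. I would then establish two sub-lemmas. The first says that if $h \in \supp(\Gamma)$ has image in $D'$, then $h|_{D'^k} \in \supp(\Gamma')$: combine a witnessing fractional polymorphism for $h$ with a unary one containing $f$ (as in the proof of Lemma~\ref{lem:suppclone}), project onto restrictions to $D'^k$, and check that the fpol inequality survives. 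The second says that idempotent operations in $\supp(\Gamma')$ lie in $\supp(\Gamma' \cup \mathcal{C}_{D'})$: unary operations in $\supp(\Gamma')$ form a finite group $G$ of bijections of $D'$, and these bijections turn out to be value-preserving for every $\phi \in \Gamma'$ (by applying Lemma~\ref{lem:killing} to an instance with a single $\phi$-constraint and its $G$-translates); one can then replace each $h \in \supp(\omega)$ by $\theta_h^{-1} \circ h$, where $\theta_h(x) = h(x,\ldots,x) \in G$, obtaining a fractional polymorphism supported on idempotent operations in which $\tilde g$ (being already idempotent, so $\theta_{\tilde g} = \mathrm{id}$) still has positive weight.

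For the reverse direction, from a $k$-ary WNU $g' \in \supp(\Gamma' \cup \mathcal{C}_{D'})$ I would form the lift $g(x_1,\ldots,x_k) := g'(f(x_1),\ldots,f(x_k))$. A composition of witnessing fractional polymorphisms in the style of Lemma~\ref{lem:suppclone} shows $g \in \supp(\Gamma)$, and $g$ satisfies the WNU identities because $g'$ does. The main obstacle is that $g$ fails to be idempotent on $D \setminus D'$, where $g(x,\ldots,x) = f(x) \neq x$. I would repair this by the case-split operation $\hat g$ that outputs $x_1$ when all its inputs coincide and $g'(f(x_1),\ldots,f(x_k))$ otherwise; this $\hat g$ is an idempotent WNU on $D$, and the remaining task is to exhibit a fractional polymorphism of $\Gamma$ containing it, which I expect to do by writing $\hat g$ as a convex combination of $g$ and the projection $\pi_1^{(k)}$ restricted to the diagonal, leveraging the fact that on a constant input $(d,\ldots,d) \in \rdom(\phi)$ the value $\phi(d,\ldots,d)$ is already the minimum attainable over any operation's evaluation. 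This last verification — keeping the patched operation inside $\supp(\Gamma)$ — is where I expect the proof to require the most care, and it is the step I would treat as the main technical hurdle.
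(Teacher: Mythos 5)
Your skeleton (normalise $f$ to a retraction, transfer operations between $\supp(\Gamma)$ and $\supp(\Gamma')$, fix idempotency using the core's unary bijections, lift back through $f$) parallels the paper's, but the crucial transfer steps are not actually established. In both directions you propose to compose a witnessing fractional polymorphism with a unary one containing $f$ and then ``project onto restrictions to $D'$''. This fails: in the composed distribution only the operations composed with $f$ itself are guaranteed to have image in $D'$; the other support operations $u \circ h$ (resp.\ $g'[u,\dots,u]$) with $u \in \supp(\mu)$, $u \neq f$, need not map into $D'$ (in the lift direction $g'[u,\dots,u]$ is not even defined), so the ``restricted'' object is not a distribution over operations on $D'$, and you cannot discard or condition on the good operations without losing inequality~(\ref{eq:wpol}). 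This is exactly the point where the paper invokes the non-vanishing expansion machinery of \cite[Lemma~10]{ktz15:sicomp} to build a unary fractional polymorphism $\mu'$ of $\Gamma$ \emph{all} of whose support operations have image in $D'$; with $\mu'$ in hand both the restriction and the lift go through by straightforward composition. (One could instead prove your two transfer sub-lemmas from Lemma~\ref{lem:killing} together with Lemma~\ref{lem:core}, but your sketch does neither.) The same gap infects your opening normalisation: that $f|_{D'}$ is a bijection is not immediate from the definition of a core, since coreness only constrains $\supp(\Gamma')$, and placing $f|_{D'}$ (or a power of it) into $\supp(\Gamma')$ is itself a transfer statement of the above kind; likewise your claim that the unary bijections in $\supp(\Gamma')$ are value-preserving is not obtained by ``$G$-translates'' of a single constraint, because $\phi \circ u$ is in general not expressible as a $\Gamma'$-constraint.

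The reverse-direction repair is moreover genuinely wrong. The coordinatewise patched operation $\hat g$ (output $x_1$ on constant tuples, $g'(f(x_1),\dots,f(x_k))$ otherwise) need not even be a polymorphism of $\Gamma$, hence cannot lie in $\supp(\Gamma)$, and no convex-combination argument can rescue it: a fractional operation putting weight on $g$ and on $\proj^{(k)}_1$ has those two operations in its support, not their patch. Concretely, take $D=\{0,1,2\}$, $D'=\{0,1\}$, $f$ the retraction with $f(2)=0$, and the crisp language $\Gamma=\{\phi_R,\phi_N\}$ with $R=\{0,1\}^2\cup\{(2,1),(2,2)\}$ and $N=\{(0,1),(1,0)\}$. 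Then $f\in\supp(\Gamma)$, $\Gamma'=\Gamma[D']$ is a core, and the Boolean majority $g'$ is an idempotent WNU in $\supp(\Gamma'\cup\mathcal{C}_{D'})$; but applying $\hat g$ to $(2,1),(2,2),(2,2)\in R$ gives first coordinate $2$ (all inputs equal) and second coordinate $\mathrm{maj}(f(1),f(2),f(2))=0$, i.e.\ the tuple $(2,0)\notin R$. The paper does not attempt any such patch: in its reverse direction it only exhibits the operations $t'[h,\dots,h]$ with $h\in\supp(\mu')$, which satisfy the WNU identities, and it deals with idempotency only in the forward direction, where one can compose with the inverse of the unary diagonal inside the core's support.
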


\begin{proof}
Let $\mu$ be a unary fractional polymorphism of $\Gamma$ with an operation $g$
in its support such that $g(D) = D'$.
We begin by constructing a unary fractional polymorphism $\mu'$ of $\Gamma$ such that
\emph{every} operation in $\supp(\mu')$ has an image in $D'$.

We will use a technique for generating fractional polymorphisms described
in~\cite[Lemma 10]{ktz15:sicomp}.
It takes a fractional polymorphism, such as $\mu$, a set of
\emph{collections} $\mathbb{G}$,
which in our case will be the set of operations in the clone of $\supp(\mu)$, a set of \emph{good}
collections $\mathbb{G^*}$, which will be operations from $\mathbb{G}$ with an image in $D'$,
and an \emph{expansion operator} {\sf Exp} which assigns to every collection 
a probability distribution on $\mathbb{G}$.

The procedure starts by generating each collection $f \in \supp(\mu)$ with probability $\mu(f)$,
and subsequently the expansion operation {\sf Exp} maps 
$f \in \mathbb{G}$ to the probability distribution 
that assigns probability $\Pr_{h \sim \mu} [h \circ f = f']$ to each operation $f' \in \mathbb{G}$.
The expansion operator is required to be \emph{non-vanishing}, which means that
starting from any collection $f \in \mathbb{G}$, repeated expansion must assign
non-zero probability to a good collection in $\mathbb{G}^*$.
In our case, this is immediate, since starting from a collection $f$, the good collection
$g \circ f$ gets probability at least $\mu(g)$ which is non-zero by assumption.
By \cite[Lemma~10]{ktz15:sicomp}, it now follows that $\Gamma$ has a fractional
polymorphism $\mu'$ with $\supp(\mu') \subseteq \mathbb{G}^*$.
So every operation in $\supp(\mu')$ has an image in $D'$.

Now, we show that if $\supp(\Gamma)$ contains an $m$-ary WNU $t$, then
$\supp(\Gamma' \cup \mathcal{C}_{D'})$ also contains an $m$-ary WNU.
Let $\omega$ be a fractional polymorphism of $\Gamma$ with $t$ in its support.
Define $\omega'$ by $\omega'(f') = \Pr_{h \sim \mu', f \sim \omega} [h \circ f = f']$.
Then, $\omega'$ is a
fractional polymorphism of $\Gamma$ in which every operation has an image in $D'$,
so $\omega'$ is a fractional polymorphism of $\Gamma'$.
Furthermore, for any unary operation $h \in \supp(\mu')$, $h \circ t$ is again a WNU, so
$\supp(\Gamma')$ contains an $m$-ary WNU $t'$.
Next, let $h(x) = t'(x, \dots, x)$.
Since $\Gamma'$ is a core, the set of unary operations in $\supp(\Gamma')$ contains only 
bijections and is closed under composition (Lemma~\ref{lem:suppclone}).
It follows that $h$ has an inverse $h^{-1} \in \supp(\Gamma')$, and since $\supp(\Gamma')$
is a clone, $h^{-1} \circ t'$ is an idempotent WNU
in $\supp(\Gamma')$.
We conclude that $h^{-1} \circ t' \in \supp(\Gamma' \cup \{ \mathcal{C}_{D'} \})$.

For the opposite direction, let $t'$ be an $m$-ary WNU in $\supp(\Gamma' \cup \{ \mathcal{C}_{D'} \})$,
and let $\omega'$ be a fractional polymorphism of $\Gamma' \cup \{ \mathcal{C}_{D'} \}$
with $t'$ in its support.
Then, $\omega'$ is also a fractional polymorphism of $\Gamma'$.
Define $\omega$ by $\omega(f) = \Pr_{h \sim \mu', f' \sim \omega'} [f'[h, \dots, h] = f]$.
Then, $\omega$ is a fractional polymorphism of $\Gamma$, and, for every $h \in \supp(\mu')$,
the operation $t[h, \dots, h]$ is an $m$-ary WNU in $\supp(\omega)$.
We conclude that $t \in \supp(\Gamma)$,
which finishes the proof. 
\end{proof}

Hence the BWC guarantees valued relational width $(2,3)$ also for languages not
necessarily containing constant unary relations, as required by Theorem~\ref{thm:maincons}.

\begin{theorem}\label{thm:main}
Let $\Gamma$ be a valued constraint language of finite size.
If $\supp(\Gamma)$ satisfies the BWC, then $\Gamma$ has valued relational width
$(2,3)$.
\end{theorem}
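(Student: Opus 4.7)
The plan is to reduce Theorem~\ref{thm:main} to Theorem~\ref{thm:maincons} via the core-and-constants machinery that has already been assembled. Let $\Gamma'$ be a core of $\Gamma$ on some subdomain $D' \subseteq D$. By Lemma~\ref{lem:coreBWC}, the support clone $\supp(\Gamma' \cup \mathcal{C}_{D'})$ also satisfies the BWC, so Theorem~\ref{thm:maincons} gives that the expanded language $\Gamma' \cup \mathcal{C}_{D'}$ has valued relational width $(2,3)$. All that remains is to transfer this fact back to $\Gamma$ itself.

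Given an instance $I$ of $\VCSP(\Gamma)$, let $I'$ be the instance obtained by restricting each $\phi_i$ in $I$ to $D'$, so that Lemma~\ref{lem:core} gives $\opt(I) = \opt(I')$. Since SA$(2,3)$ is always a relaxation of the VCSP optimum, it suffices to show that whenever SA$(2,3)$ for $I$ is feasible, an optimal solution $\lambda$ has objective at least $\opt(I)$. The key construction uses a unary fractional polymorphism $\mu'$ of $\Gamma$ every operation in whose support maps $D$ into $D'$; such a $\mu'$ is exactly what is produced in the first half of the proof of Lemma~\ref{lem:coreBWC}. With this $\mu'$, I would define
\[
\lambda'_i(\tup{s}) = \Pr_{g \sim \mu',\, \tup{s}' \sim \lambda_i}[g(\tup{s}') = \tup{s}],
\]
with $g$ applied coordinate-wise. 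A routine check, analogous to the unary-operation version of the argument in the proof of Lemma~\ref{lem:fullsupport}, shows that $\lambda'$ satisfies the SA$(2,3)$ feasibility constraints (\ref{sa:marginal})--(\ref{sa:nonnegative}), that its support lies in $(D')^{S_i}$, and that the defining inequality for the fractional polymorphism $\mu'$ bounds its objective by that of $\lambda$.

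Since $I'$ then admits $\lambda'$ as a feasible SA$(2,3)$-solution when viewed as an instance of $\VCSP(\Gamma' \cup \mathcal{C}_{D'})$, applying Theorem~\ref{thm:maincons} to this enlarged core language yields $\opt(I') \leq \text{obj}(\lambda')$. Chaining
\[
\opt(I) \;=\; \opt(I') \;\leq\; \text{obj}(\lambda') \;\leq\; \text{obj}(\lambda) \;=\; \text{SA}(2,3)\text{-opt}(I)
\]
closes the gap against the trivial inequality in the reverse direction, so $\Gamma$ has valued relational width $(2,3)$. The only potentially delicate point is the existence of a unary fractional polymorphism $\mu'$ whose support lies entirely in maps to $D'$; but this has already been extracted by the expansion technique inside Lemma~\ref{lem:coreBWC}, so no genuinely new difficulty arises beyond the bookkeeping needed to lift $\lambda$ to $\lambda'$.
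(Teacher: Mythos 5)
Your argument is correct, and its skeleton is the same as the paper's: pass to a core $\Gamma'$ on $D'$, combine Lemma~\ref{lem:coreBWC} with Theorem~\ref{thm:maincons} to get valued relational width $(2,3)$ for $\Gamma' \cup \mathcal{C}_{D'}$, and transfer back to $\Gamma$ via Lemma~\ref{lem:core}. The interesting difference is in how the SA solutions are moved between $I$ and $I'$. The paper's proof records only the easy containment: every feasible solution of the SA$(2,3)$-relaxation of $I'$ extends by zero to one of $I$, which gives the inequality ``SA-optimum of $I$ $\leq$ SA-optimum of $I'$ $=\min(I')=\min(I)$'', i.e.\ only the direction that holds for any relaxation; it leaves implicit why a feasible SA solution of $I$ whose support uses labels outside $D'$ cannot have value strictly below $\min(I)$. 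You work in the nontrivial direction: you push an optimal SA solution $\lambda$ of $I$ forward by a unary fractional polymorphism $\mu'$ of $\Gamma$ all of whose support operations map into $D'$ (precisely the object constructed in the first half of the proof of Lemma~\ref{lem:coreBWC}), verify feasibility of $\lambda'$ and its support in $(D')^{S_i}$ by the unary case of the computation in Lemma~\ref{lem:fullsupport} (polymorphism-ness gives constraint (\ref{sa:infeas}), coordinatewise action commuting with projections gives (\ref{sa:marginal})), and bound its objective using the defining inequality of $\mu'$; the chain $\min(I)=\min(I')\leq \mathrm{obj}(\lambda')\leq \mathrm{obj}(\lambda)\leq \min(I)$ then closes the gap. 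So this is not a different decomposition or a different key lemma, but a more careful execution of exactly the transfer step that the paper's two-sentence conclusion glosses over---and it is the step carrying the real content. What the paper's phrasing buys is brevity; what yours buys is a self-contained justification of the lower bound on the SA optimum, which a careful reader must otherwise reconstruct along the very lines you give.
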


\begin{proof}
Let $D$ be the domain of $\Gamma$, and $D' \subseteq D$ the domain of a core
$\Gamma'$ of $\Gamma$.
By Lemma~\ref{lem:coreBWC} and Theorem~\ref{thm:maincons}, the language 
$\Gamma' \cup \{ \mathcal{C}_{D'} \}$ has valued relational width $(2,3)$,
so clearly $\Gamma'$ has valued relational width $(2,3)$ as well.
Every feasible solution to the SA$(2,3)$-relaxation of an instance $I'$ of VCSP$(\Gamma')$
is also a feasible solution to the SA$(2,3)$-relaxation of the corresponding instance $I$ of
VCSP$(\Gamma)$.
The result now follows from Lemma~\ref{lem:core} as the optimum of $I'$ and $I$ coincide.
\end{proof}

Secondly, we show that for any VCSP instance over a language of valued
relational width $(2,3)$ we can not only compute the value of an optimal solution but
we can also find an optimal assignment in polynomial time.

\begin{proposition}\label{prop:self-reduce}
Let $\Gamma$ be a valued constraint language of finite size and $I$ an instance of VCSP$(\Gamma)$.
If $\supp(\Gamma)$ satisfies the BWC, then an optimal assignment to $I$ can be found in polynomial time.
\end{proposition}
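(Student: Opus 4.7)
The plan is a standard self-reduction via constant unary constraints, where the only real subtlety is ensuring that after each fixing step we remain in a language with valued relational width $(2,3)$. The algorithm will never call an oracle for anything other than the SA$(2,3)$ LP, applied to instances over a fixed language independent of $I$.

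First, I reduce to a core with all constants adjoined. Let $\Gamma'$ be a core of $\Gamma$ on domain $D' \subseteq D$, and let $I'$ be the instance of VCSP$(\Gamma')$ obtained from $I$ by restricting each weighted relation, as in Lemma~\ref{lem:core}; then $\opt(I') = \opt(I)$ and any optimal assignment to $I'$ is optimal to $I$. By Lemma~\ref{lem:coreBWC}, $\supp(\Gamma' \cup \mathcal{C}_{D'})$ satisfies the BWC, and so by Theorem~\ref{thm:maincons} (applied to $\Gamma' \cup \mathcal{C}_{D'}$, which contains all constant unary relations on $D'$) this language has valued relational width $(2,3)$. All subsequent work takes place in VCSP$(\Gamma' \cup \mathcal{C}_{D'})$.

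Now I self-reduce. Solve the SA$(2,3)$ relaxation of $I'$; if it is infeasible then $I'$, and hence $I$, has no feasible solution and we are done. Otherwise let $m^\star$ be the LP optimum, which equals $\opt(I')$ by valued relational width $(2,3)$. Process the variables $x_1,\dots,x_n$ in order. For the current variable $x_j$, try each $d \in D'$ in turn: append a constant unary constraint $\{(d)\}(x_j) \in \mathcal{C}_{D'}$ to the current instance and solve the SA$(2,3)$ relaxation of the resulting instance of VCSP$(\Gamma' \cup \mathcal{C}_{D'})$. Since the language has valued relational width $(2,3)$, the LP value equals the true optimum of the extended instance. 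At least one $d$ must return $m^\star$: indeed, any optimal assignment to the current accumulated instance (which exists since the optimum is finite) assigns $x_j$ to some value in $D'$, and that value works. Fix $x_j := d$ by keeping the constant constraint in place permanently and move on.

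After $n$ rounds the accumulated constant constraints define an assignment $\sigma' \colon V \to D'$ with $I'(\sigma') = m^\star = \opt(I')$, and by Lemma~\ref{lem:core} this $\sigma'$, viewed as an assignment $V \to D$, is optimal for $I$. Each SA$(2,3)$ solve runs in polynomial time (the LP has polynomially many variables and constraints), and the total number of solves is at most $1 + n\lvert D'\rvert$, yielding a polynomial-time algorithm. The main point that makes the procedure work, rather than a potential obstacle, is the availability of $\mathcal{C}_{D'}$: adding constants to $\Gamma$ itself could destroy the BWC for $\supp(\cdot)$ (non-idempotent WNUs need not descend to idempotent ones), but reducing to the core first and invoking Lemma~\ref{lem:coreBWC} guarantees that $\supp(\Gamma' \cup \mathcal{C}_{D'})$ satisfies the BWC, so Theorem~\ref{thm:maincons} keeps applying after each fixing step.
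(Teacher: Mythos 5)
Your proposal is correct and follows essentially the same route as the paper's proof: pass to a core, adjoin the constants $\mathcal{C}_{D'}$, invoke Lemma~\ref{lem:coreBWC} together with Theorem~\ref{thm:maincons}, and then self-reduce by forcing variables one at a time with constant constraints, using at most $1+|V|\,|D'|$ SA$(2,3)$ solves. You merely spell out the details (feasibility check, why some value must preserve the optimum) that the paper leaves implicit.
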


\begin{proof}
Let $\Gamma'$ be a core of $\Gamma$ on domain $D'$, 
and let $\Gamma_c =  \Gamma' \cup \{ \mathcal{C}_{D'} \}$.
By Lemma~\ref{lem:coreBWC}, $\supp(\Gamma_c)$ satisfies the BWC,
so by Theorem~\ref{thm:maincons} we can obtain the optimum of $I$
by solving a linear programming relaxation.
Now, we can use self-reduction to obtain an optimal assignment.
It suffices to modify the instance $I$ to successively force each variable
to take on each value of $D'$. Whenever the optimum of the modified instance
matches that of the original instance, we can move on to assign the next variable.
This means that we need to solve at most $1+\left|V\right| \left|D'\right|$ linear
programming relaxations before finding an optimal assignment,
where $V$ is the set of variables of $I$.
\end{proof}

Finally, we show that testing for the BWC is a decidable problem.
We rely on the following result that was proved in~\cite{Kozik14:au}, and also follows
from results in~\cite{Barto14:jloc}.

\begin{theorem}[\hspace*{-0.3em}\cite{Kozik14:au}]\label{thm:34}
An idempotent clone $C$ of operations satisfies the BWC if, and only if, $C$
contains a ternary WNU $f$ and a $4$-ary WNU $g$ with $f(y,x,x)=g(y,x,x,x)$ for
all $x$ and $y$.
\end{theorem}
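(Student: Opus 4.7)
The statement is an equivalence of Mal'cev conditions: the BWC on one side, and a specific two-identity presentation (ternary WNU $f$, $4$-ary WNU $g$, and $f(y,x,x) = g(y,x,x,x)$) on the other. My plan is to prove the two directions separately, with the backward direction (from the specific $f, g$ to BWC) being the essential construction.

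For the forward direction, assume $C$ contains WNUs $w_n$ of every arity $n \geq N_0$. Extracting a ternary and a $4$-ary WNU individually is immediate from the hypothesis (take them directly if present, or identify variables in a higher-arity WNU). The delicate point is the compatibility identity $f(y,x,x) = g(y,x,x,x)$. My plan is to massage $f$ and $g$ by composition with other $w_n$'s until their binary traces $y \mapsto f(y,x,x)$ and $y \mapsto g(y,x,x,x)$ coincide: repeatedly substituting $f$'s or $g$'s into larger $w_n$'s acts on the binary trace by a composition in the finitely generated semigroup of unary operations inside $C$, and one can exploit the finiteness of this semigroup together with the availability of WNUs of every large arity to force equality of traces after a bounded number of steps.

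For the backward direction, I am given $f$ (ternary WNU), $g$ ($4$-ary WNU) with $f(y,x,x) = g(y,x,x,x)$, and I must exhibit an $n$-ary WNU $h_n \in C$ for every sufficiently large $n$. A composition tree over $\{f, g\}$ with $a$ ternary internal nodes and $b$ four-ary internal nodes has $2a + 3b + 1$ leaves, so one can realise every $n \geq 3$ as the arity of some tree. The heart of the proof is to arrange the tree and verify that the resulting $h_n$ actually satisfies the WNU identity: substituting $y$ at a single leaf and $x$ at every other leaf must yield the same value independent of which leaf received the $y$. I plan to prove this by induction on the tree, using the individual WNU identities of $f$ and $g$ to permute siblings at each node, and using the compatibility identity to rewrite an $f$-node whose non-$y$ subtrees all evaluate to $x$ as an equivalent $g$-node (and vice versa) at the interface where the $y$-value is propagated upward.

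The main obstacle is the backward direction: arbitrary compositions of $f$ and $g$ only satisfy block-wise WNU identities inside individual subtrees, not the global WNU identity. The compatibility $f(y,x,x) = g(y,x,x,x)$ is exactly the gluing condition that lets one thread a chain of local rewrites across an $f$-subtree/$g$-subtree boundary, identifying single-$y$ inputs at arbitrary pairs of leaves. Making this rigorous requires a careful choice of tree shape together with a term-rewriting argument, and I would lean on the term-manipulation machinery developed in Kozik's paper and in Barto--Kozik. A minor secondary point is that a few small arities might be unattainable by the generic scheme, but since the BWC only requires all but finitely many arities, such exceptions can be absorbed without affecting the conclusion.
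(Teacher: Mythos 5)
First, note that the paper does not prove this statement at all: Theorem~\ref{thm:34} is quoted from Kozik--Krokhin--Valeriote--Willard~\cite{Kozik14:au} (the paper merely remarks it also follows from~\cite{Barto14:jloc}), so your proposal has to stand on its own rather than be compared with an internal argument.

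On the backward direction you have the right flavour but a concrete gap. Your arity count $2a+3b+1$ ranges over arbitrary composition trees, and for those the WNU identity genuinely fails: if one leaf sits at depth $d_1$ and another at depth $d_2$, then evaluating at a tuple with a single $y$ gives the $d_1$-fold versus the $d_2$-fold iterate of the common binary trace $t_x(z)=f(z,x,x)=g(z,x,x,x)$, and these need not coincide; no amount of local rewriting with the compatibility identity repairs this. What does work is to restrict to trees in which \emph{all leaves have the same depth} $d$: every subtree avoiding the $y$-leaf evaluates to $x$ by idempotency, and along the path from the $y$-leaf to the root each node sees one non-$x$ argument, so by the WNU identities of $f$ and $g$ together with $f(y,x,x)=g(y,x,x,x)$ the value is $t_x^{(d)}(y)$ regardless of which leaf carries $y$ and regardless of the $f$/$g$ labelling. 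The achievable arities of such depth-uniform trees are exactly the integers in $[3^d,4^d]$, and since these intervals overlap for $d\geq 4$, one obtains WNUs of every arity $n\geq 81$, which is all but finitely many, as the BWC requires. So this half is salvageable, but the decisive structural choice (equal leaf depth) is missing from your plan, and the induction you sketch would not close for the trees you propose to use.

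The forward direction is where your proposal breaks down irreparably as written. Identifying variables in a higher-arity WNU does \emph{not} yield a WNU of lower arity (the required identities, e.g.\ comparing $w(y,y,x,\dots,x)$ with $w(x,x,y,x,\dots,x)$, are simply not consequences of the WNU identities), so extracting a ternary and a $4$-ary WNU from WNUs of all large arities is not ``immediate''; and the compatibility $f(y,x,x)=g(y,x,x,x)$ cannot be forced by finiteness of a semigroup of unary operations, since the trace is a binary operation and the identity must hold for all $x$ and $y$ simultaneously (its whole point is to kill the affine/module obstruction, where a ternary WNU forces coefficient $1/3$ and a $4$-ary one $1/4$). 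This implication is precisely the hard content of the cited theorem: it is proved via the structure theory of congruence meet-semidistributive (bounded-width) algebras --- Mar\'oti--McKenzie and Barto--Kozik type results, absorption arguments --- not by syntactic massaging. Deferring it to ``the term-manipulation machinery developed in Kozik's paper'' essentially assumes the result being proved, so the proposal does not establish the theorem.
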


\begin{proposition}
Testing whether a valued constraint language of finite size
satisfies the BWC is decidable.
\end{proposition}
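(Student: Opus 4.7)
The plan is to reduce the question to an existential property over operations of bounded arity, using the core reduction (Lemma~\ref{lem:coreBWC}) and the finite characterisation of the BWC in Theorem~\ref{thm:34}, and then to verify the resulting finitely many candidates via finite linear programs.

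The key membership subroutine is the following. For any fixed arity $k$, the set $\pol^{(k)}(\Gamma) \subseteq \ops^{(k)}_D$ is finite and computable by brute enumeration of all $k$-ary operations $D^k \to D$ followed by checking the polymorphism condition against the finite language $\Gamma$. For a given $f \in \pol^{(k)}(\Gamma)$, membership $f \in \supp(\Gamma)$ reduces, via the argument of Lemma~\ref{lem:killing}, to feasibility of the linear system~(\ref{eq:killing1}), whose variables are indexed by $\pol^{(k)}(\Gamma)$ and whose constraints are indexed by the finite set of triples $(\phi, \tup{x}_1, \dots, \tup{x}_k)$ with $\phi \in \Gamma$ and $\tup{x}_i \in \feas(\phi)$; this is a finite LP and hence decidable.

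First, I would compute a core of $\Gamma$: enumerate the unary operations in $\pol^{(1)}(\Gamma)$, use the subroutine to isolate those lying in $\supp(\Gamma)$, and pick one, call it $g$, of minimum image size; then set $D' = g(D)$ and $\Gamma' = \Gamma[D']$. By Definition~\ref{def:core} this is a core, and by Lemma~\ref{lem:coreBWC} the BWC for $\supp(\Gamma)$ transfers to $\supp(\Gamma_c)$, where $\Gamma_c = \Gamma' \cup \mathcal{C}_{D'}$. Since $\Gamma_c$ contains every constant unary relation on $D'$, every operation in $\supp(\Gamma_c) \subseteq \pol(\Gamma_c)$ is idempotent, so $\supp(\Gamma_c)$ is an idempotent clone.

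Now Theorem~\ref{thm:34} applies to $\supp(\Gamma_c)$: it satisfies the BWC if and only if it contains a ternary WNU $f$ and a $4$-ary WNU $g$ with $f(y,x,x) = g(y,x,x,x)$ for all $x, y \in D'$. Since $D'$ is finite, one simply enumerates all pairs $(f,g)$ of idempotent operations on $D'$ of arities $3$ and $4$ satisfying both the WNU identities and the link identity (finitely many candidates), and for each tests $f \in \supp(\Gamma_c)$ and $g \in \supp(\Gamma_c)$ via the membership subroutine at arities $3$ and $4$. The algorithm accepts if and only if some such pair is found. The main substantive obstacle is that the BWC a priori quantifies over WNUs of infinitely many arities; Theorem~\ref{thm:34} is precisely what compresses this to a bounded-arity existential, while Lemma~\ref{lem:coreBWC} is what allows us to invoke it after passing to the idempotent setting.
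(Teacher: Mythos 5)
Your overall route is essentially the paper's: pass to a core, add the constant unary relations, transfer the BWC via Lemma~\ref{lem:coreBWC}, compress the condition to arities $3$ and $4$ via Theorem~\ref{thm:34}, and decide membership in the support clone by finite linear programs. The implementation details differ harmlessly: the paper checks the Theorem~\ref{thm:34} condition with a single LP (citing the finite-valued machinery), whereas you enumerate all candidate pairs $(f,g)$ satisfying the identities and test each one with the Farkas system~(\ref{eq:killing1}) from Lemma~\ref{lem:killing}; this is fine for decidability, and testing $f$ and $g$ separately is legitimate because Theorem~\ref{thm:34} only requires the clone $\supp(\Gamma_c)$ to contain each of them.

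The one genuine gap is the core-finding step. You pick a unary $g \in \supp(\Gamma)$ of minimum image size and assert that $\Gamma[g(D)]$ is a core of $\Gamma$ ``by Definition~\ref{def:core}''. The definition does not give this: it also requires that $\Gamma[g(D)]$ itself be a core, i.e.\ that every unary operation in $\supp(\Gamma[g(D)])$ be a bijection, and that does not follow from minimality of $|g(D)|$ without an argument transferring unary operations of $\supp(\Gamma[g(D)])$ back into $\supp(\Gamma)$ -- note that $\supp(\Gamma[g(D)])$ is a priori unrelated to $\supp(\Gamma)$. The claim is true and can be repaired with the paper's own tools: if $h \in \supp(\Gamma[g(D)])$ is unary and non-bijective, then $h \circ g \in \pol(\Gamma)$, and for every instance $I$ of VCSP$(\Gamma)$ and every $\tau \in \opt(I)$ one has $g(\tau) \in \opt(I)$ by Lemma~\ref{lem:killing}, then $g(\tau) \in \opt(I')$ for the restricted instance $I'$ (the proof of Lemma~\ref{lem:core} shows the optima of $I$ and $I'$ coincide without using that the restriction is a core), hence $h(g(\tau)) \in \opt(I') \subseteq \opt(I)$ by Lemma~\ref{lem:killing} applied to $\Gamma[g(D)]$; applying Lemma~\ref{lem:killing} once more puts $h \circ g$ in $\supp(\Gamma)$ with strictly smaller image, contradicting minimality. (Alternatively, one can reuse the $\mu'$ construction from the proof of Lemma~\ref{lem:coreBWC}.) The paper sidesteps this issue entirely by citing an LP-based core computation from earlier work. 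With that repair, your argument is complete.
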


\begin{proof}
Let $\Gamma$ be a valued constraint language of finite size on domain $D$. Let
$\Gamma'$ be a core of $\Gamma$ defined on domain $D'\subseteq D$. Finding $D'$
and $\Gamma'$ can be done via linear
programming~\cite[Section~4]{tz12:arxiv-dichotomy-v2}. 
By Lemma~\ref{lem:coreBWC}, $\supp(\Gamma)$
satisfies the BWC if, and only if, $\supp(\Gamma'\cup \mathcal{C}_{D'})$ satisfies the
BWC. As constant unary relations enforce idempotency,
by Theorem~\ref{thm:34},
$\supp(\Gamma'\cup\mathcal{C}_{D'})$ satisfies the BWC if, and only if,
$\supp(\Gamma'\cup\mathcal{C}_{D'})$ contains a ternary WNU $f$ and a 4-ary 
WNU $g$ with $f(y,x,x)=g(y,x,x,x)$ for all $x$ and $y$. It is easy to write a
linear program that checks for this condition, as it has been
done in the context of finite-valued constraint languages~\cite[Section~4]{tz12:arxiv-dichotomy-v2}.
\end{proof}

\subsection{Tractable Languages}

Here we give some examples of previously studied valued constraint languages
and show that, as a corollary of Theorem~\ref{thm:main},
they all have valued relational width $(2,3)$.

\begin{example}\label{ex:maj}
Let $\omega$ be a ternary fractional operation defined by
$\omega(f)=\omega(g)=\omega(h)=\frac{1}{3}$ for some 
(not necessarily distinct) majority operations $f$,
$g$, and $h$. Cohen et al. proved the tractability of any language improved by
$\omega$ by a reduction to CSPs with a majority
polymorphism~\cite{Cohen06:complexitysoft}.
\end{example}

\begin{example}\label{ex:cohen}
Let $\omega$ be a ternary fractional operation defined by $\omega(f)=\frac{2}{3}$ and $\omega(g)=\frac{1}{3}$, 
where $f:\{0,1\}^3\to\{0,1\}$ is the Boolean majority operation and
$g:\{0,1\}^3\to\{0,1\}$ is the Boolean minority operation.
Cohen et al. proved the tractability of any language improved by $\omega$ by a
simple propagation algorithm~\cite{Cohen06:complexitysoft}.
\end{example}

\begin{example}\label{ex:mjn}
Generalising Example~\ref{ex:cohen} from
Boolean to arbitrary domains, let $\omega$ be a ternary fractional operation
such that $\omega(f)=\frac{1}{3}$,
$\omega(g)=\frac{1}{3}$, and $\omega(h)=\frac{1}{3}$ for some (not necessarily
distinct) majority operations $f$ and $g$, and a minority operation $h$; such an
$\omega$ is called an MJN.
Kolmogorov and \v{Z}ivn\'y proved the tractability of any
language improved by $\omega$ by a 3-consistency algorithm and a reduction, via
Example~\ref{ex:stp}, to submodular function minimisation~\cite{kz13:jacm}.
\end{example}

The following corollary of Theorem~\ref{thm:main} generalises Examples~\ref{ex:maj}-\ref{ex:mjn}.

\begin{corollary}\label{cor:maj}
Let $\Gamma$ be a valued constraint language of finite size such that $\supp(\Gamma)$
contains a majority operation.
Then, $\Gamma$ has valued relational width $(2,3)$.
\end{corollary}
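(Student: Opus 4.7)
The plan is to derive Corollary~\ref{cor:maj} as a direct consequence of Theorem~\ref{thm:main}; it therefore suffices to verify that $\supp(\Gamma)$ satisfies the BWC, i.e., contains a WNU of all but finitely many arities. By Lemma~\ref{lem:suppclone}, $\supp(\Gamma)$ is a clone, and by hypothesis it contains a ternary majority operation $m$.

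First, I observe that $m$ itself is already a ternary WNU: it is idempotent since $m(x,x,x)=x$, and the majority identities $m(y,x,x)=m(x,y,x)=m(x,x,y)=x$ are a particular case of the WNU identities. For arities $k\geq 4$, I invoke the classical fact (going back to Baker and Pixley) that the clone generated by a ternary near-unanimity operation contains a $k$-ary near-unanimity operation for every $k\geq 3$. Concretely, these can be obtained by iterating $m$; for instance, the $4$-ary operation
\[
m_4(x_1,x_2,x_3,x_4)\ =\ m(m(x_1,x_2,x_3),\, m(x_1,x_2,x_4),\, m(x_1,x_3,x_4))
\]
is an idempotent near-unanimity (a short case analysis using $m(y,x,x)=m(x,y,x)=m(x,x,y)=x$ confirms all four identities $m_4(y,x,x,x)=m_4(x,y,x,x)=m_4(x,x,y,x)=m_4(x,x,x,y)=x$), and analogous compositions produce $k$-ary near-unanimity operations for all larger $k$. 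Every near-unanimity operation is in particular an idempotent WNU.

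Since $\supp(\Gamma)$ is a clone containing $m$, it contains all these compositions, hence WNUs of every arity $k\geq 3$, which is certainly cofinitely many. Thus $\supp(\Gamma)$ satisfies the BWC and Theorem~\ref{thm:main} yields that $\Gamma$ has valued relational width $(2,3)$, as claimed.

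The only step with any content is verifying that iterated compositions of $m$ produce near-unanimity operations of every arity; this is a routine and well-documented computation, so I expect no essential obstacle. The rest is a clean invocation of Lemma~\ref{lem:suppclone} and Theorem~\ref{thm:main}.
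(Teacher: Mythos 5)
Your proposal is correct and follows essentially the same route as the paper: show that the majority operation yields WNUs of all arities in the clone $\supp(\Gamma)$ (Lemma~\ref{lem:suppclone}), so the BWC holds, and then invoke Theorem~\ref{thm:main}. The only difference is cosmetic: where you build genuine $k$-ary near-unanimity operations by Baker--Pixley-style iteration, the paper simply pads the majority with dummy arguments, taking $g_k(x_1,\dots,x_k)=f(x_1,x_2,x_3)$, which is already a $k$-ary WNU.
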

\begin{proof}
Let $f$ be a majority operation in $\supp(\Gamma)$.
Then, for every $k \geq 3$, $f$ generates a WNU $g_k$ of arity $k$:
$g_k(x_1,\dots,x_k) = f(x_1,x_2,x_3)$.
By Lemma~\ref{lem:suppclone}, $\supp(\Gamma)$ is a clone,
so $g_k \in \supp(\Gamma)$ for all $k \geq 3$.
Therefore, $\supp(\Gamma)$ satisfies the BWC and
the result follows from Theorem~\ref{thm:main}.
\end{proof}

\begin{example}\label{ex:stp}
Let $\omega$ be a binary fractional operation defined by
$\omega(f)=\omega(g)=\frac{1}{2}$, where $f$ and $g$ are conservative and
commutative operations and $f(x,y)\neq g(x,y)$ for every $x$ and $y$; such an
$\omega$ is called a \emph{symmetric tournament pair} (STP). Cohen et al. proved
the tractability of any language improved by $\omega$ by a 3-consistency
algorithm and an ingenious reduction to submodular function
minimisation~\cite{Cohen08:Generalising}. Such languages were shown to be the
only tractable languages among conservative finite-valued constraint
languages~\cite{kz13:jacm}. 
\end{example}

The following corollary of Theorem~\ref{thm:main} generalises Example~\ref{ex:stp}.

\begin{corollary}\label{cor:stp}
Let $\Gamma$ be a valued constraint language of finite size such that $\supp(\Gamma)$
contains two symmetric tournament operations (that is, binary operations
$f$ and $g$ that are both conservative and commutative and $f(x,y)\neq g(x,y)$
for every $x$ and $y$).
Then, $\Gamma$ has valued relational width $(2,3)$.
\end{corollary}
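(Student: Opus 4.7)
The plan is to reduce Corollary~\ref{cor:stp} to Corollary~\ref{cor:maj} by showing that $\supp(\Gamma)$ contains a ternary majority operation. By Lemma~\ref{lem:suppclone}, $\supp(\Gamma)$ is a clone, so it is closed under composition; in particular, any ternary term built out of the two given binary operations $f$ and $g$ also belongs to $\supp(\Gamma)$.

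The natural candidate is the ``median'' operation
\[
m(x,y,z)\ =\ g(f(x,y),\, g(f(y,z), f(x,z)))\,,
\]
which is just the translation, via $f \leftrightarrow \wedge$ and $g \leftrightarrow \vee$, of the standard Boolean median $(x\vee y)\wedge(y\vee z)\wedge(x\vee z)$. The key structural fact I would use is that an STP picks apart each pair: since $f,g$ are conservative and commutative, and since $f(x,y)\neq g(x,y)$ for $x \neq y$, we get $\{f(x,y), g(x,y)\} = \{x,y\}$ for every $x,y \in D$. This reduces every identity to a two-element computation.

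I would then verify directly that $m(x,x,y) = m(x,y,x) = m(y,x,x) = x$ for all $x,y \in D$ by a short case analysis on which of $x$ or $y$ the value $f(x,y)$ equals (the other then being $g(x,y)$); idempotency of $f$ and $g$ collapses most of the nested expressions. For instance, $m(x,x,y) = g(x, g(f(x,y), f(x,y))) = g(x, f(x,y))$, which equals $x$ in both cases. The other two identities are analogous.

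Having established that $m \in \supp(\Gamma)$ is a ternary majority operation, Corollary~\ref{cor:maj} immediately yields that $\Gamma$ has valued relational width $(2,3)$. I do not expect any real obstacle here beyond the routine case analysis verifying the majority identities; the conceptual content is entirely contained in the observation that an STP generates a majority in its clone, which is a standard move in the algebraic analysis of conservative languages.
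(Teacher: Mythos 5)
Your proposal is correct and follows essentially the same route as the paper: exhibit a ternary majority operation in the clone $\supp(\Gamma)$ generated by the STP pair (the paper uses $h(x,y,z)=f(f(g(x,y),g(x,z)),g(y,z))$, citing Cohen et al., while your median-style term $g(f(x,y),g(f(y,z),f(x,z)))$ works equally well since $\{f(x,y),g(x,y)\}=\{x,y\}$) and then invoke Corollary~\ref{cor:maj}. The case analysis you sketch does verify the majority identities, so no gap remains.
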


\begin{proof}
It is straightforward to verify that $h(x,y,z)=f(f(g(x,y),g(x,z)),g(y,z))$ is a
majority operation, as observed in~\cite[Corollary~5.8]{Cohen08:Generalising}.
The claim then follows from Corollary~\ref{cor:maj}.
\end{proof}

\begin{example}\label{ex:tp}
Generalising Example~\ref{ex:stp},
let $\omega$ be a binary fractional operation defined by
$\omega(f)=\omega(g)=\frac{1}{2}$, where $f$ and $g$ are conservative and
commutative operations; such
an $\omega$ is called a \emph{tournament pair}. Cohen et al. proved the
tractability of any language improved by $\omega$ by a consistency-reduction
relying on Bulatov's result~\cite{Bulatov06:ja}, which in turn relies on
3-consistency, to the STP case from Example~\ref{ex:stp}~\cite{Cohen08:Generalising}.
\end{example}

The following corollary of Theorem~\ref{thm:main} generalises Example~\ref{ex:tp}.

\begin{corollary}\label{cor:tp}
Let $\Gamma$ be a valued constraint language of finite size such that $\supp(\Gamma)$
contains a tournament operation (that is, a binary conservative and commutative
operation).
Then, $\Gamma$ has valued relational width $(2,3)$.
\end{corollary}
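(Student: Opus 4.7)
The plan is to reduce the statement to Theorem \ref{thm:main} by exhibiting weak near-unanimity operations of all but finitely many arities inside $\supp(\Gamma)$. Rather than constructing an infinite family of WNUs by hand, I would apply the Kozik criterion of Theorem \ref{thm:34}: an idempotent clone satisfies the BWC as soon as it contains a ternary WNU $f$ and a $4$-ary WNU $g$ such that $f(y,x,x) = g(y,x,x,x)$. By Lemma \ref{lem:suppclone}, $\supp(\Gamma)$ is a clone, and so the subclone $\clone(t)$ generated by $t$ is contained in $\supp(\Gamma)$. Since $t$ is idempotent (conservativity gives $t(x,x)=x$) and composition preserves idempotency, $\clone(t)$ is itself an idempotent clone, so Theorem \ref{thm:34} can be applied to it.

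The key technical observation is the absorption identity $t(x,t(x,y)) = t(x,y)$, which is immediate from conservativity: if $t(x,y) = x$ it reduces to $t(x,x) = x$, and if $t(x,y) = y$ it reduces to $t(x,y) = t(x,y)$. With this identity I would take
\[
m(x,y,z) \;=\; t\bigl(t(x,y),\,t(y,z)\bigr), \qquad
h(x_1,x_2,x_3,x_4) \;=\; t\bigl(t(x_1,x_2),\,t(x_3,x_4)\bigr),
\]
both of which lie in $\clone(t)$. A short case check, using commutativity, idempotency and the absorption identity, shows that $m$ is a ternary WNU with $m(y,x,x) = t(x,y)$, and that $h$ is a $4$-ary WNU with $h(y,x,x,x) = t(x,y)$. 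The matching identity $m(y,x,x) = h(y,x,x,x)$ required by Theorem \ref{thm:34} therefore holds on the nose.

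Once these two operations are in place, Theorem \ref{thm:34} yields that $\clone(t)$ satisfies the BWC, and consequently so does the larger clone $\supp(\Gamma)$. Theorem \ref{thm:main} then completes the proof. I do not foresee any serious obstacle: the only mildly clever point is simultaneously guessing definitions of $m$ and $h$ so that their ``WNU values'' on inputs with repeated arguments both collapse to the same expression $t(x,y)$, which is precisely the identity demanded by Theorem \ref{thm:34}.
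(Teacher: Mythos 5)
Your proposal is correct, and I checked the omitted case analysis: with the absorption identity $t(x,t(x,y))=t(x,y)$ (plus commutativity and idempotency, both consequences of conservativity), every one-off-diagonal value of $m(x,y,z)=t(t(x,y),t(y,z))$ and of $h(x_1,x_2,x_3,x_4)=t(t(x_1,x_2),t(x_3,x_4))$ does collapse to $t(x,y)$, so both are idempotent WNUs satisfying $m(y,x,x)=h(y,x,x,x)$. However, your route differs from the paper's. The paper observes that a tournament operation $f$ is a $2$-semilattice and then \emph{directly} exhibits WNUs of every arity $k\geq 3$ by left-folding, $g_k(x_1,\dots,x_k)=f(f(\cdots f(f(x_1,x_2),x_3)\cdots),x_k)$, which lie in $\supp(\Gamma)$ by Lemma~\ref{lem:suppclone}; this verifies the BWC with bare hands and uses Theorem~\ref{thm:34} nowhere in this corollary (the paper invokes it only for the decidability proposition). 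You instead produce just a ternary and a quaternary WNU and outsource the arity amplification to Theorem~\ref{thm:34}, applied — correctly, and this is a necessary care point — to the idempotent subclone generated by $t$ rather than to $\supp(\Gamma)$ itself, which need not be idempotent; BWC then passes up to $\supp(\Gamma)$ because it is monotone under clone inclusion. The trade-off: the paper's argument is self-contained and gives explicit WNUs of all arities, while yours is shorter on the amplification step but rests on the nontrivial external characterisation of Kozik, Krokhin, Valeriote and Willard. Both conclude identically via Lemma~\ref{lem:suppclone} and Theorem~\ref{thm:main}.
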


\begin{proof} 
Let $f$ be a tournament operation from $\supp(\Gamma)$. We claim that $f$ is a
2-semilattice; that is, $f$ is idempotent, commutative, and satisfies the
restricted associativity law $f(x,f(x,y)) = f(f(x,x),y)$.  To see that, notice
that $f(x,f(x,y))=x$ if $f(x,y)=x$ and $f(x,f(x,y))=y$ if $f(x,y)=y$; together,
$f(x,f(x,y))=f(x,y)$. On the other hand, trivially $f(f(x,x),y)=f(x,y)$. Also
note that $f(x,f(y,x))=f(x,f(x,y))=f(x,y)$, so $f$ is a ternary WNU. 
For every $k \geq 3$, $f$ generates a WNU $g_k$ of arity $k$:
$g_k(x_1,\dots,x_k) = f(f(\ldots(f(x_1,x_2),x_3),\ldots),x_k)$.
By Lemma~\ref{lem:suppclone}, $\supp(\Gamma)$ is a clone,
so $g_k \in \supp(\Gamma)$ for all $k \geq 3$.
Therefore, $\supp(\Gamma)$ satisfies the BWC so
the result follows from Theorem~\ref{thm:main}.
\end{proof}

\begin{example}\label{ex:stp-mjn}
In this example we denote by $\multiset{\ldots}$ a multiset.
Let $\omega$ be a binary fractional operation on $D$ defined by
$\omega(f)=\omega(g)=\frac{1}{2}$ and let $\mu$ be a ternary fractional
operation on $D$ defined by $\mu(h_1)=\mu(h_2)=\mu(h_3)=\frac{1}{3}$. Moreover, assume
that $\multiset{f(x,y),g(x,y)}=\multiset{x,y}$ for every $x$ and $y$ and
$\multiset{h_1(x,y,z),h_2(x,y,z),h_3(x,y,z)}=\multiset{x,y,z}$ for every $x$, $y$, and
$z$. 
Let $\Gamma$ be a language on $D$ such that for every two-element subset $\{a,b\} \subseteq D$,
either $\omega|_{\{a,b\}}$ is an STP or $\mu|_{\{a,b\}}$ is an MJN.
Kolmogorov and \v{Z}ivn\'y proved the tractability of $\Gamma$ by a
3-consistency algorithm and a reduction, via Example~\ref{ex:stp}, to submodular
function minimisation~\cite{kz13:jacm}. Such languages were shown to be the only
tractable languages among conservative valued constraint
languages~\cite{kz13:jacm}.
We will discuss conservative valued constraint languages in more detail in
Section~\ref{sub:conservative}.
\end{example}

The following corollary of Theorem~\ref{thm:main} covers Example~\ref{ex:stp-mjn}.

\begin{corollary}\label{cor:stp-mjn}
Let $\Gamma$ be a valued constraint language of finite size with fractional polymorphisms
$\omega$ and $\mu$ as described in Example~\ref{ex:stp-mjn}.
Then, $\Gamma$ has valued relational width $(2,3)$.
\end{corollary}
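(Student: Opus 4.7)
By Theorem~\ref{thm:main}, it suffices to show that $\supp(\Gamma)$ satisfies the BWC. The plan is to verify the characterisation in Theorem~\ref{thm:34} by constructing a ternary WNU $t_3$ and a $4$-ary WNU $t_4$ in $\supp(\Gamma)$ with $t_3(y,x,x)=t_4(y,x,x,x)$. First I analyse the pair-wise structure. The multiset condition on $\omega$ forces $\{f(x,y),g(x,y)\}=\{x,y\}$ for all $x,y$, so for any pair $\{a,b\}\subseteq D$ either $\omega|_{\{a,b\}}$ is an STP---in which case, by commutativity, $f$ and $g$ restrict to $\min$ and $\max$ under some total order---or $f$ and $g$ restrict to the two projections $\pi_1,\pi_2$ on the pair; in the latter situation, by hypothesis, $\mu|_{\{a,b\}}$ is an MJN, so two of $h_1|_{\{a,b\}},h_2|_{\{a,b\}},h_3|_{\{a,b\}}$ are ternary majorities and one is a ternary minority.

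Set $M(x,y,z)=f(f(g(x,y),g(x,z)),g(y,z))$ as in the proof of Corollary~\ref{cor:stp}. On every STP pair, $M$ restricts to a ternary majority; on every non-STP pair, a short calculation using $f,g\in\{\pi_1,\pi_2\}$ shows that $M$ restricts to the second projection $\pi_2^{(3)}$. Now define $t_3(x,y,z)=h_1(M(x,y,z),M(y,z,x),M(z,x,y))$, which lies in $\supp(\Gamma)$. On STP pairs, $M$ is symmetric under cyclic shifts of its arguments, so the three arguments to $h_1$ all equal $\mathrm{maj}(x,y,z)$ and idempotency yields $t_3|_{\{a,b\}}=\mathrm{maj}$. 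On MJN pairs, one computes $t_3|_{\{a,b\}}(x,y,z)=h_1|_{\{a,b\}}(y,z,x)$, which is a ternary majority or a ternary minority depending on the role of $h_1$ on that pair; both are WNUs on $\{a,b\}$. Since the WNU identities for $t_3$ only involve a pair $\{x,y\}$, $t_3$ is a ternary WNU on $D$.

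For $t_4$, the same template applies. Let $M_4$ be a $4$-ary near-unanimity operation built from $f$ and $g$ in the standard way, so that $M_4$ restricts to a $4$-ary NU on every STP pair. Combining $M_4$ with the $h_i$'s produces a $t_4\in\supp(\Gamma)$ whose pair-wise restrictions are $4$-ary WNUs matching the type of $t_3$: an NU-type WNU on STP pairs and on MJN pairs where $h_1$ restricts to a majority, and a ``minority-NU'' WNU satisfying $t_4(y,x,x,x)=y$ on MJN pairs where $h_1$ restricts to a minority. The key observation is that on every MJN pair the clone $\supp(\Gamma)|_{\{a,b\}}$ contains both the ternary majority and the ternary minority, and hence realises $4$-ary WNUs of either type (for instance, $\mathrm{maj}(\mathrm{min}(x_1,x_2,x_3),\mathrm{min}(x_1,x_2,x_4),\mathrm{min}(x_2,x_3,x_4))$ is a minority-type $4$-ary WNU on the pair); by contrast, the clone generated by a lone ternary minority on $\{a,b\}$ is the affine clone and contains no $4$-ary WNU at all, which is exactly why the construction cannot rely on $h_1$ alone on such pairs. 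Once $t_3$ and $t_4$ are verified pair-wise, Theorem~\ref{thm:34} delivers the BWC for $\supp(\Gamma)$, and Theorem~\ref{thm:main} concludes the proof.

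The main obstacle is making the $t_4$ construction uniform across all MJN pairs: one needs a single $4$-ary term in $\supp(\Gamma)$ that realises the correct $4$-ary WNU on every MJN pair irrespective of which among $h_1,h_2,h_3$ happens to play the role of the minority on that pair. This forces a term that uses $h_2$ and $h_3$ in addition to $h_1$, and whose pair-wise analysis is symmetric in the choice of the minority $h_i$; the corresponding case check, while routine, is the most delicate part of the argument.
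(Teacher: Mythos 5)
Your reduction to Theorem~\ref{thm:34} leaves the decisive step unproved: the $4$-ary WNU $t_4$ satisfying $t_3(y,x,x)=t_4(y,x,x,x)$ is never constructed. You describe what its restrictions to the various pairs ought to look like and correctly observe that $4$-ary WNUs of either ``type'' exist in each pair clone (your $\mathrm{maj}/\mathrm{minority}$ combination does work on a two-element set), but pairwise existence is not the issue: you need a single term in $\supp(\Gamma)$, built from $f,g,h_1,h_2,h_3$, whose restriction is of the right type on \emph{every} pair simultaneously and agrees with $t_3$ under the one-variable identification --- and you yourself flag this uniformisation as ``the most delicate part''. Deferring the central construction makes this a genuine gap rather than a routine check. (A smaller point: Theorem~\ref{thm:34} is stated for idempotent clones, and $\supp(\Gamma)$ need not be idempotent; this is repairable by applying the theorem to the subclone generated by the idempotent operations $t_3,t_4$, but it should be said.)

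The difficulty you are fighting does not arise in the paper's argument, because in the MJN condition the minority occupies a fixed coordinate: on every non-STP pair, $h_1$ and $h_2$ restrict to majorities and $h_3$ to the minority (this is exactly what the paper uses when it evaluates $q(x,x,y)=p(x,x,y)$). The paper sets $p(x,y,z)=f(f(g(y,x),g(x,z)),g(y,z))$, notes that $p$ restricts to a majority on STP pairs and to $\proj^{(3)}_1$ or $\proj^{(3)}_2$ on the remaining pairs, and then checks that $q(x,y,z)=p(h_1(x,y,z),h_2(x,y,z),h_3(x,y,z))$ is a majority on every pair, hence a majority operation in $\supp(\Gamma)$; Corollary~\ref{cor:maj} then finishes, with no $4$-ary term and no appeal to Theorem~\ref{thm:34}. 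Under this convention your own construction also collapses to the easy case: your $M$ is indeed a majority on STP pairs and $\proj^{(3)}_2$ on the others, so $t_3$ restricts to $h_1(y,z,x)$, a majority, on every MJN pair, and $t_3$ is itself a majority operation --- Corollary~\ref{cor:maj} would complete your proof immediately. It is only your unordered reading of ``MJN'' that creates the minority case, and in that reading the missing $t_4$ construction is precisely what would have to be supplied.
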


\begin{proof}
Let $P$ be the set of $2$-element subsets of $D$ such that $\omega|_{\{a,b\}}$
is an STP for $\{a,b\}\in P$ and $\mu|_{\{a,b\}}$ is an MJN for $\{a,b\}\not\in
P$. 
Let $p(x,y,z)=f(f(g(y,x),g(x,z)),g(y,z))$. Observe that $p|_{\{a,b\}}$ is a
majority for $\{a,b\}\in P$, and $p|_{\{a,b\}}$ is either $\proj^{(3)}_1$ or
$\proj^{(3)}_2$ for $\{a,b\}\not\in P$ (possibly different projections for
different $2$-element subsets from $P$). Now let
$q(x,y,z)=p(h_1(x,y,z),h_2(x,y,z),h_3(x,y,z))$. 
For $x,y\in\{a,b\}\in P$, $q(x,x,y)=q(x,y,x)=q(y,x,x)=p(\multiset{x,x,y})=x$.
For $x,y\in\{a,b\}\not\in P$, $q(x,x,y)=q(x,y,x)=q(y,x,x)=p(x,x,y)=x$
as $p$ is either the first or the second projection. Thus, $q$ is
a majority operation. 
The claim then follows from Corollary~\ref{cor:maj}.
\end{proof}

\subsection{Dichotomy for Conservative Valued Constraint Languages}
\label{sub:conservative}

A valued constraint language $\Gamma$ is called \emph{conservative}
if $\Gamma$ contains all unary $\{0,1\}$-valued weighted relations.
Kolmogorov and \v{Z}ivn\'y gave a dichotomy theorem for such languages,
showing that they are either NP-hard, or tractable, cf.~Example~\ref{ex:stp-mjn}.
Here we prove this dichotomy using the SA$(2,3)$-relaxation as the
algorithmic tool.

First, we will need a technical lemma showing that the $\opt$ operator preserves tractability.

\begin{lemma}\label{lem:addopt}
Let $\Gamma$ be a valued constraint language and $I$ an instance of VCSP$(\Gamma)$.
Then, VCSP$(\Gamma \cup \{ \opt(I) \})$ polynomial-time reduces to VCSP$(\Gamma)$.
\end{lemma}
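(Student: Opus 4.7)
The plan is to replace each occurrence of $\opt(I)(\tup{y})$ in a given instance $J$ of $\VCSP(\Gamma \cup \{\opt(I)\})$ by a heavily weighted copy of $I$ itself on the variables $\tup{y}$, obtaining an instance $J'$ of $\VCSP(\Gamma)$ whose optimal assignments coincide, up to an additive constant, with the optimal feasible assignments of $J$. Solving $J'$ with an oracle for $\VCSP(\Gamma)$ and (if needed) verifying $\opt$-membership of its optimal assignment will yield the optimum of $J$.

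More concretely, let $\Phi_I$ denote the $n$-ary weighted relation represented by $I$ (so $\opt(I) = \opt(\Phi_I)$). I would precompute from $I$ the optimum $\mu = \min_{\tup{x}\in D^n} \Phi_I(\tup{x})$ and the smallest positive gap $\delta = \min\{\Phi_I(\tup{x}) - \mu : \mu < \Phi_I(\tup{x}) < \infty\}$; the case where $\opt(I) = \emptyset$ (i.e., $I$ is infeasible) is trivial, as every instance involving $\opt(I)$ is then itself infeasible. I would also compute the spread $W = \sum_i (\max\phi_i - \min\phi_i)$ over the finite values of the $\Gamma$-constraints $\phi_i$ of $J$, and fix $M = \lceil W/\delta\rceil + 1$. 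Then $J'$ consists of the $\Gamma$-constraints of $J$ together with, for each occurrence of $\opt(I)(\tup{y}_j)$ in $J$, $M$ copies of each constraint of $I$ with its variables renamed along $\tup{y}_j$. Since $W$ and $\delta$ have polynomial bit-length, so does $M$, and $J'$ has polynomial size.

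The correctness argument rests on a case analysis: if $\sigma$ satisfies every $\opt(I)$ constraint in $J$, then $J'(\sigma) = J(\sigma) + KM\mu$ where $K$ is the number of $\opt$ constraints in $J$; whereas if $\sigma$ violates some $\opt(I)(\tup{y}_{j_0})$, then either $J'(\sigma) = \infty$ (when $\Phi_I(\sigma(\tup{y}_{j_0})) = \infty$) or
\[
J'(\sigma) \;\geq\; \textstyle\sum_i \min\phi_i \;+\; KM\mu \;+\; M\delta,
\]
which by the choice $M\delta > W$ strictly exceeds $\sum_i \max\phi_i + KM\mu \geq J'(\sigma^*)$ for any $J$-feasible $\sigma^*$. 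Consequently, whenever $J$ is feasible, every optimal assignment of $J'$ satisfies all $\opt(I)$ constraints and is, with the same assignment, an optimal assignment of $J$; whenever $J$ is infeasible this is detected in polynomial time by checking after solving $J'$ whether each $\sigma(\tup{y}_j) \in \opt(I)$ (which requires only evaluating $\Phi_I$).

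The main obstacle is selecting $M$ so that it is simultaneously polynomial-sized and large enough for the dominance argument above; this is accomplished by basing $M$ on the explicitly computable rationals $W$ (read off from $J$) and $\delta$ (read off from $I$), both of polynomial bit-length. A minor bookkeeping concern is that $\Phi_I$ may take the value $\infty$ on some tuple, but this is already absorbed in the first case of the analysis, since any assignment sending some $\tup{y}_j$ to such a tuple contributes $\infty$ to $J'$ automatically.
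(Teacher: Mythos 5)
Your construction is essentially the paper's own proof: replace each $\opt(I)$ constraint by $M$ copies of $I$, with $M$ chosen so that $M\delta$ strictly exceeds the spread of the remaining constraints, and detect infeasibility of the original instance after solving the $\Gamma$-instance (the paper does this by a value threshold, namely $\min(J') > \sum_i \max\phi_i + KM\mu$, which your inequalities already yield, rather than by inspecting the returned assignment, so either variant works). One small fix in wording: since $J'$ contains $M$ \emph{explicit} copies of $I$, polynomiality of $|J'|$ requires $M$ itself, not merely its bit-length, to be polynomially bounded; this holds because $W$ is a sum over the constraints of $J$ of spreads of fixed weighted relations, so $M$ is linear in the number of constraints, which is how the paper argues the size bound.
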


\begin{proof}
Let $\Gamma' = \Gamma \cup \{ \opt(I) \}$.
Let $J' = \sum_{i=1}^q \phi_i(\tup{x}_i)$ be an arbitrary instance of VCSP$(\Gamma')$.
We will create an instance $J$ of VCSP$(\Gamma)$ such that if the optimum of $J$ is
too large, then $J'$ is not satisfiable, and otherwise the optimum of $J'$ can be computed
from the optimum of $J$.
The variables of $J$ are the same as for $J'$.
Let $\phi_i(\tup{x}_i)$ be any valued constraint in $J'$.
If $\phi_i \in \Gamma$‚ then we add the valued constraint $\phi_i(\tup{x}_i)$ to $J$.
Otherwise $\phi_i = \opt(I)$.
In this case, we add $C$ copies of $I(\tup{x}_i)$ to the instance $J$,
where $C$ is a number that will be chosen large enough so that if $J'$ is satisfiable,
then in any optimal assignment to $J$, the variables $\tup{x}_i$ will be forced to
be an optimal solution to the instance $I$.
In such a solution, $\tup{x}_i \in \opt(I)$, and we can recover an optimal solution to $J'$.

The value of $C$ is chosen as follows:
if $I$ does not have any sub-optimal satisfying assignment, then let $C = 1$.
Otherwise, let $C = \lceil(U-L+1)/\Delta\rceil$,
where $U$ is an upper bound on the optimal value of $J'$,
$L$ is a lower bound on the optimal value of $J'$,
and $\Delta$ is the least difference between a sub-optimal and an optimal assignment to $I$.
Both $U$ and $L$ can be computed in polynomial time by taking the sum of the largest,
respectively smallest, finite values of each valued constraint.
The value of $C$ depends linearly on the number of constraints in $J'$, so
the size of $J$ is polynomial in the size of $J'$.

Let $\min(J)$, $\min(J')$, and $\min(I)$ denote the optimal value of the respective instance.
Assume first that $J'$ has a satisfying assignment.
Then, this assignment is also a satisfying assignment to $J$, so
\begin{equation}\label{eq:jprimesat}
\min(J) \leq CN \min(I) + \min(J'),
\end{equation}
where $N$ is the number of occurrences of $\opt(I)$ in $J'$.

If $J$ has a satisfying assignment $\sigma$, then we distinguish two cases.
First, assume that $\sigma$ assigns an optimal value to every copy of $I$.
Then, $\sigma$ is also a satisfying assignment of $J'$, so
\begin{equation}\label{eq:jopt}
\min(J') \leq {\rm Val}(\sigma) - CN\min(I),
\end{equation}
where ${\rm Val}(\sigma)$ denotes the value of $\sigma$.
From (\ref{eq:jprimesat}) and (\ref{eq:jopt}), we see that if $\sigma$ is an optimal
assignment to $J$, so that ${\rm Val}(\sigma) = \min(J)$,
then it is also an optimal assignment to $J'$.

Otherwise, $\sigma$ assigns a sub-optimal value to at least $C$ copies of $I$, so
\begin{equation}\label{eq:jnotopt}
{\rm Val}(\sigma) \geq C(\min(I)+\Delta) + C(N-1) \min(I) +  \min(J') \geq (U-L+1)+CN\min(I)+L.
\end{equation}
In this case, $\min(J) > CN\min(I)+U \geq CN\min(I) + \min(J')$, so by (\ref{eq:jprimesat}),
we see that $J'$ cannot satisfiable.

In summary, if $J$ is unsatisfiable, or if $\min(J) > CN\min(I)+U$, then $J'$ is unsatisfiable,
and otherwise $\min(J') = \min(J) - CN \min(I)$.
\end{proof}

The following theorem was proved by Takhanov~\cite{Takhanov10:stacs} with a 
reduction, essentially amounting to Lemma~\ref{lem:addopt}, added in~\cite{kz13:jacm}.

\begin{theorem}[\hspace*{-0.3em}\cite{kz13:jacm,Takhanov10:stacs}]\label{thm:takhanov-kz}
Let $\Gamma$ be a conservative valued constraint language.
If $\pol(\Gamma)$ does not contain a majority polymorphism, then $\Gamma$ is NP-hard.
\end{theorem}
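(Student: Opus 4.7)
The plan is to combine Takhanov's dichotomy for conservative CSPs with Lemma~\ref{lem:addopt} in order to lift a CSP-level hardness result to a hardness result for $\VCSP(\Gamma)$.

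First, I would introduce an auxiliary CSP language $\Delta$ on $D$ consisting of the feasibility relation $\feas(\phi)$ for each $\phi \in \Gamma$, together with every unary relation $R \subseteq D$. Since every operation preserving all of $\Delta$ in particular preserves each $\feas(\phi)$, one has $\pol(\Delta) \subseteq \pol(\Gamma)$, and therefore the hypothesis that $\pol(\Gamma)$ contains no majority operation implies that $\pol(\Delta)$ contains no majority operation either. Because $\Delta$ contains all unary relations on $D$, Takhanov's theorem~\cite{Takhanov10:stacs} applies and yields that $\CSP(\Delta)$ is NP-hard.

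Next, I would show that $\CSP(\Delta)$ polynomial-time reduces to $\VCSP(\Gamma)$. A constraint $R(\tup{x})$ with $R = \feas(\phi)$ for some $\phi \in \Gamma$ is enforced directly by adding the VCSP constraint $\phi(\tup{x})$: the feasible (finite-cost) assignments to the resulting VCSP instance are precisely those respecting $\feas(\phi)$. To handle a unary constraint $R \subseteq D$, I would use conservativity: there is a $\{0,1\}$-valued unary weighted relation $\phi_R \in \Gamma$ with $\phi_R^{-1}(0) = R$, and the single-constraint instance $I_R = \phi_R(x)$ satisfies $\opt(I_R) = R$. Iterating Lemma~\ref{lem:addopt} over the finitely many unary relations on $D$ shows that $\VCSP(\Gamma \cup \{R \mid R \subseteq D\})$ reduces in polynomial time to $\VCSP(\Gamma)$. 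Composing these reductions, and observing that satisfiability of a CSP instance corresponds to finiteness of the optimum of the derived VCSP instance, transfers the NP-hardness of $\CSP(\Delta)$ to $\VCSP(\Gamma)$.

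The bulk of the work is carried out by Takhanov's theorem, which is imported as a black box. The genuine obstacle on top of that is that pure constraints of the form ``$x \in R$'' are not directly expressible in $\VCSP(\Gamma)$---conservativity of $\Gamma$ only supplies $\{0,1\}$-valued (soft), not $\{0,\infty\}$-valued (hard), unary weighted relations---and Lemma~\ref{lem:addopt} is precisely the tool that converts such soft unary preferences into hard singleton-style constraints at polynomial cost.
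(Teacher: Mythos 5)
You should first note that the paper does not actually give its own proof of Theorem~\ref{thm:takhanov-kz}: it is imported, the hardness core coming from Takhanov~\cite{Takhanov10:stacs} and a reduction (essentially of the kind in Lemma~\ref{lem:addopt}) supplied in~\cite{kz13:jacm}. Measured against that intended argument, your proposal has a genuine gap at its central step, the claim that ``$\CSP(\Delta)$ is NP-hard by Takhanov's theorem.'' Takhanov's dichotomy is not a statement about the decision problem for crisp languages containing all unary relations; it is a dichotomy for the \emph{minimum-cost homomorphism} problem, in which crisp constraints are combined with finite-valued unary cost functions that must be \emph{minimised}. For the plain decision CSP, the absence of a majority polymorphism does not imply NP-hardness (a conservative crisp language preserved, say, by a semilattice operation on each pair can be tractable). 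Concretely, take $D=\{0,1\}$ and let $\Gamma$ consist of all $\{0,1\}$-valued unary weighted relations together with the $\{0,\infty\}$-valued ternary Horn relation $R=\{0,1\}^3\setminus\{(1,1,0)\}$. Then $\pol(\Gamma)=\pol(R)$ contains no majority operation (applying the Boolean majority to $(1,1,1),(1,0,0),(0,1,0)\in R$ gives $(1,1,0)\notin R$), yet your $\Delta$ (namely $R$ together with all crisp unary relations) is preserved by $\min$, so $\CSP(\Delta)$ is tractable. Thus your intermediate hardness claim is false, even though the theorem's conclusion still holds for this $\Gamma$: the hardness must come from the optimisation over the soft unary weighted relations, which your reduction throws away by using conservativity only to emulate hard unary constraints via Lemma~\ref{lem:addopt}.

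The cited route keeps precisely this optimisation content. One views $\VCSP(\Gamma)$ as containing the minimum-cost homomorphism problem over the feasibility relations of $\Gamma$: each constraint $\phi(\tup{x})$ enforces $\feas(\phi)$ as a hard constraint, while sums (with multiplicities) of the $\{0,1\}$-valued unary weighted relations express the unary cost functions to be minimised; the role of the Lemma~\ref{lem:addopt}-style reduction in~\cite{kz13:jacm} is to make additional crisp relations such as $\opt(I)$ available when needed. Takhanov's hardness criterion is then applied to this optimisation problem under the hypothesis that $\pol(\Gamma)$ contains no majority operation, and this yields NP-hardness of $\VCSP(\Gamma)$. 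Your observations that $\pol(\Delta)\subseteq\pol(\Gamma)$ and that Lemma~\ref{lem:addopt} lets you impose unary restrictions are correct as far as they go, but they cannot substitute for this step, because the problem you reduce from is not NP-hard under the theorem's hypothesis.
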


\begin{theorem}
Let $\Gamma$ be a conservative valued constraint language.
Either $\Gamma$ is NP-hard, or $\Gamma$ has valued relational width $(2,3)$.
\end{theorem}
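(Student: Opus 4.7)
If $\pol(\Gamma)$ contains no majority operation, then $\Gamma$ is NP-hard by Theorem~\ref{thm:takhanov-kz}, and we are done. So assume $\pol(\Gamma)$ admits a (necessarily conservative) majority operation. The plan is to verify the hypotheses of Corollary~\ref{cor:stp-mjn} and thereby conclude that $\Gamma$ has valued relational width $(2,3)$.

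First I reduce to the Boolean case pointwise. For each $2$-element subset $\{a,b\} \subseteq D$, the sub-language $\Gamma[\{a,b\}]$ is a conservative Boolean valued constraint language. Since $\Gamma$ contains all $\{0,1\}$-valued unary weighted relations, we can force any variable of an instance of $\VCSP(\Gamma[\{a,b\}])$ to take values in $\{a,b\}$, giving a polynomial-time reduction from $\VCSP(\Gamma[\{a,b\}])$ to $\VCSP(\Gamma)$. Hence if some $\Gamma[\{a,b\}]$ were NP-hard, so would be $\Gamma$. Invoking the Cohen--Cooper--Jeavons--Krokhin dichotomy for Boolean conservative VCSPs~\cite{Cohen06:complexitysoft}, each $\Gamma[\{a,b\}]$ therefore admits either an STP or an MJN fractional polymorphism on $\{a,b\}$; let $P$ collect those $2$-subsets where the STP case applies.

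The main obstacle is to assemble these local fractional polymorphisms into a single binary $\omega \in \fpol(\Gamma)$ with $\omega(f) = \omega(g) = \tfrac{1}{2}$ and $\multiset{f(x,y),g(x,y)} = \multiset{x,y}$ such that $\omega|_{\{a,b\}}$ is an STP for every $\{a,b\} \in P$, and a single ternary $\mu \in \fpol(\Gamma)$ with $\mu(h_i) = \tfrac{1}{3}$ and $\multiset{h_1(x,y,z),h_2(x,y,z),h_3(x,y,z)} = \multiset{x,y,z}$ such that $\mu|_{\{a,b\}}$ is an MJN for every $\{a,b\} \notin P$. For this construction I plan to use Lemma~\ref{lem:killing}, which characterises $\supp(\Gamma)$ as those polymorphisms preserving $\opt(I)$ for every $I \in \VCSP(\Gamma)$, together with Lemma~\ref{lem:addopt}, which allows closing $\Gamma$ under the $\opt$-operator without losing tractability. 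The existence of a majority in $\pol(\Gamma)$ combined with a Farkas-style compactness argument on the finite sub-language then produces $\omega$ and $\mu$ whose local restrictions are as dictated above. Once $\omega$ and $\mu$ are in hand, Corollary~\ref{cor:stp-mjn} immediately yields valued relational width $(2,3)$.
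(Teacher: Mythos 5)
Your opening move (no majority in $\pol(\Gamma)$ implies NP-hardness via Theorem~\ref{thm:takhanov-kz}) is fine, but the tractability branch has an essential hole at exactly the step you flag as ``the main obstacle''. Knowing that, for each two-element subset $\{a,b\}$, the induced sub-language $\Gamma[\{a,b\}]$ admits \emph{some} fractional polymorphism that is an STP or an MJN on $\{a,b\}$ is much weaker than what Corollary~\ref{cor:stp-mjn} requires: a single binary $\omega\in\fpol(\Gamma)$ and a single ternary $\mu\in\fpol(\Gamma)$ on all of $D$, each supported on exactly two (respectively three) operations satisfying the multiset identities $\multiset{f(x,y),g(x,y)}=\multiset{x,y}$ and $\multiset{h_1(x,y,z),h_2(x,y,z),h_3(x,y,z)}=\multiset{x,y,z}$ for \emph{all} arguments, and whose restrictions behave correctly on every pair simultaneously. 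Fractional polymorphisms of the restriction $\Gamma[\{a,b\}]$ need not lift to fractional polymorphisms of $\Gamma$ at all, and none of the tools you cite delivers the required global object: Lemma~\ref{lem:killing} only decides whether a \emph{given single} operation lies in $\supp(\Gamma)$, Lemma~\ref{lem:addopt} only closes the language under $\opt$ without changing its complexity, and a generic Farkas/LP-duality argument gives existence of some fractional polymorphism with a prescribed operation in its support, not one whose entire support has the rigid two- or three-operation conservative structure of Example~\ref{ex:stp-mjn}. Assembling the pairwise STP/MJN data into such a global pair is precisely the lengthy technical core of~\cite{kz13:jacm}, which this paper explicitly sets out to bypass; compressing it into one sentence leaves the proof incomplete.

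The paper's proof avoids constructing $\omega$ and $\mu$ altogether. It sets $F$ to be the set of majority operations in $\pol(\Gamma)\setminus\supp(\Gamma)$; by Lemma~\ref{lem:killing} each $f\in F$ violates $\pol(\opt(I_f))$ for some instance $I_f$, and one passes to $\Gamma'=\Gamma\cup\{\opt(I_f)\mid f\in F\}$. If $\pol(\Gamma')$ still contains a majority operation, that operation cannot lie in $F$, hence it lies in $\supp(\Gamma)$, and Corollary~\ref{cor:maj} (not Corollary~\ref{cor:stp-mjn}) immediately yields valued relational width $(2,3)$; if not, Theorem~\ref{thm:takhanov-kz} makes the conservative language $\Gamma'$ NP-hard, and Lemma~\ref{lem:addopt} transfers NP-hardness back to $\Gamma$. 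If you want to repair your argument along your lines, redirect it: rather than building the STP/MJN pair, use Lemma~\ref{lem:killing} and Lemma~\ref{lem:addopt} to show that some majority operation already lies in $\supp(\Gamma)$ --- that is the point where these lemmas genuinely do the work, and it makes the Boolean case analysis and the assembly step unnecessary.
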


\begin{proof}
Let $F$ be the set of majority operations in $\pol(\Gamma) \setminus \supp(\Gamma)$.
By Lemma~\ref{lem:killing}, for each $f \in F$, there is an instance $I_f$ of VCSP$(\Gamma)$
such that $f \not\in \pol(\opt(I_f))$.
Let $\Gamma' = \Gamma \cup \{ \opt(I_f) \mid f \in F \}$.
Assume that $\pol(\Gamma')$ contains a majority polymorphism $f$.
Then, $f \not\in F$, so $f \in \supp(\Gamma)$.
From Corollary~\ref{cor:maj}, it follows that $\Gamma$ has
valued relational width $(2,3)$.
If $\pol(\Gamma')$ does not contain a majority polymorphism, then,
since $\Gamma$ is conservative, so is $\Gamma'$, and hence
$\Gamma'$ is NP-hard by Theorem~\ref{thm:takhanov-kz}.
Therefore,
$\Gamma$ is NP-hard by Lemma~\ref{lem:addopt}.
\end{proof}

\section{Conclusions}

We have shown that most previously studied tractable 
valued constraint languages that are not purely relational
fall into the cases covered by Theorem~\ref{thm:main}.
There is however one class of languages which we
have not succeeded in analysing.
These are the valued constraint languages improved by
a so-called generalised weak tournament pair (GWTP)
identified in Uppman~\cite{Uppman13:icalp}.
The definition of this class is rather intricate and we pose
as an open problem the question whether such languages
have valued relational width $(2,3)$.

\begin{problem}
Do valued constraint languages improved by a generalised
weak tournament pair have valued relational width $(2,3)$?
\end{problem}

\bibliographystyle{plain}

\begin{thebibliography}{10}

\bibitem{Barto14:jloc}
Libor Barto.
\newblock The collapse of the bounded width hierarchy.
\newblock {\em Journal of Logic and Computation}, 2014.

\bibitem{Barto12:stoc}
Libor Barto and Marcin Kozik.
\newblock Robust {S}atisfiability of {C}onstraint {S}atisfaction {P}roblems.
\newblock In {\em Proceedings of the 44th Annual ACM Symposium on Theory of
  Computing (STOC'12)}, pages 931--940. ACM, 2012.

\bibitem{Barto14:jacm}
Libor Barto and Marcin Kozik.
\newblock Constraint {S}atisfaction {P}roblems {S}olvable by {L}ocal
  {C}onsistency {M}ethods.
\newblock {\em Journal of the ACM}, 61(1), 2014.
\newblock Article No. 3.

\bibitem{Bulatov06:ja}
Andrei Bulatov.
\newblock Combinatorial problems raised from 2-semilattices.
\newblock {\em Journal of Algebra}, 298:321--339, 2006.

\bibitem{Bulatov05:classifying}
Andrei Bulatov, Andrei Krokhin, and Peter Jeavons.
\newblock Classifying the {C}omplexity of {C}onstraints using {F}inite
  {A}lgebras.
\newblock {\em {SIAM} Journal on Computing}, 34(3):720--742, 2005.

\bibitem{Chekuri04:sidma}
Chandra Chekuri, Sanjeev Khanna, Joseph Naor, and Leonid Zosin.
\newblock A linear programming formulation and approximation algorithms for the
  metric labeling problem.
\newblock {\em SIAM Journal on Discrete Mathematics}, 18(3):608--625, 2004.

\bibitem{Cohen08:Generalising}
David~A. Cohen, Martin~C. Cooper, and Peter~G. Jeavons.
\newblock Generalising submodularity and {H}orn clauses: {T}ractable
  optimization problems defined by tournament pair multimorphisms.
\newblock {\em Theoretical Computer Science}, 401(1-3):36--51, 2008.

\bibitem{Cohen06:complexitysoft}
David~A. Cohen, Martin~C. Cooper, Peter~G. Jeavons, and Andrei~A. Krokhin.
\newblock The {C}omplexity of {S}oft {C}onstraint {S}atisfaction.
\newblock {\em Artificial Intelligence}, 170(11):983--1016, 2006.

\bibitem{Dalmau15:soda}
V\'ictor Dalmau, Andrei Krokhin, and Rajsekar Manokaran.
\newblock Towards a characterization of constant-factor approximable {M}in
  {C}{S}{P}s.
\newblock In {\em Proceedings of the Twenty-Sixth Annual {ACM-SIAM} Symposium
  on Discrete Algorithms (SODA'15)}. {SIAM}, 2015.

\bibitem{Dalmau13:robust}
V\'{\i}ctor Dalmau and Andrei~A. Krokhin.
\newblock {Robust Satisfiability for CSPs: Hardness and Algorithmic Results}.
\newblock {\em ACM Transactions on Computation Theory}, 5(4), 2013.
\newblock Article No. 15.

\bibitem{Vega07:soda}
Wenceslas~Fernandez de~la Vega and Claire Kenyon{-}Mathieu.
\newblock Linear programming relaxations of maxcut.
\newblock In {\em Proceedings of the 18th Annual {ACM-SIAM} Symposium on
  Discrete Algorithms (SODA'07)}, pages 53--61. {SIAM}, 2007.

\bibitem{Ene13:soda}
Alina Ene, Jan Vondr{\'{a}}k, and Yi~Wu.
\newblock Local distribution and the symmetry gap: Approximability of multiway
  partitioning problems.
\newblock In {\em Proceedings of the Twenty-Fourth Annual {ACM-SIAM} Symposium
  on Discrete Algorithms (SODA'13)}, pages 306--325. {SIAM}, 2013.

\bibitem{Feder98:monotone}
Tom\'as Feder and Moshe~Y. Vardi.
\newblock The {C}omputational {S}tructure of {M}onotone {M}onadic {S{N}{P}} and
  {C}onstraint {S}atisfaction: {A} {S}tudy through {D}atalog and {G}roup
  {T}heory.
\newblock {\em {SIAM} Journal on Computing}, 28(1):57--104, 1998.

\bibitem{fz15:galois}
Peter Fulla and Stanislav {\noopsort{ZZ}\v{Z}}ivn\'y.
\newblock {{A Galois Connection for Valued Constraint Languages of Infinite
  Size}}.
\newblock In preparation, 2015.

\bibitem{hkp14:sicomp}
Anna Huber, Andrei Krokhin, and Robert Powell.
\newblock Skew {b}isubmodularity and {v}alued {C}{S}{P}s.
\newblock {\em SIAM Journal on Computing}, 43(3):1064--1084, 2014.

\bibitem{jkz14:beatcs}
Peter Jeavons, Andrei Krokhin, and Stanislav {\noopsort{ZZ}\v{Z}}ivn\'y.
\newblock The complexity of valued constraint satisfaction.
\newblock {\em Bulletin of the European Association for Theoretical Computer
  Science (EATCS)}, 113:21--55, 2014.

\bibitem{ktz15:sicomp}
Vladimir Kolmogorov, Johan Thapper, and Stanislav {\noopsort{ZZ}\v{Z}}ivn\'y.
\newblock The power of linear programming for general-valued {C}{S}{P}s.
\newblock {\em SIAM Journal on Computing}, 2015.
\newblock To appear, arXiv:1311.4219v3.

\bibitem{kz13:jacm}
Vladimir Kolmogorov and Stanislav {\noopsort{ZZ}\v{Z}}ivn\'y.
\newblock The complexity of conservative valued {C}{S}{P}s.
\newblock {\em Journal of the ACM}, 60(2), 2013.
\newblock Article No. 10.

\bibitem{Kozik14:au}
Marcin Kozik, Andrei Krokhin, Matt Valeriote, and Ross Willard.
\newblock Characterizations of several {M}altsev {C}onditions.
\newblock {\em Algebra Universalis}, 2014.
\newblock To appear.

\bibitem{Kun12:itcs}
G{\'a}bor Kun, Ryan O'Donnell, Suguru Tamaki, Yuichi Yoshida, and Yuan Zhou.
\newblock Linear programming, width-1 {C}{S}{P}s, and robust satisfaction.
\newblock In {\em Proceedings of the 3rd Innovations in Theoretical Computer
  Science (ITCS'12)}, pages 484--495. ACM, 2012.

\bibitem{LaroseZadori07:au}
Benoit Larose and L\'azl\'o Z\'adori.
\newblock Bounded width problems and algebras.
\newblock {\em Algebra Universalis}, 56:439--466, 2007.

\bibitem{Ochremiak14:algebraic}
Joanna Ochremiak.
\newblock Algebraic properties of valued constraint satisfaction problem.
\newblock Technical report, March 2014.
\newblock arXiv:1403.0476.

\bibitem{Raghavendra08:stoc}
Prasad Raghavendra.
\newblock Optimal algorithms and inapproximability results for every {C}{S}{P}?
\newblock In {\em Proceedings of the 40th Annual {A}{C}{M} Symposium on Theory
  of Computing (STOC'08)}, pages 245--254. ACM, 2008.

\bibitem{Sherali1990}
H.~D. Sherali and W.~P. Adams.
\newblock A hierarchy of relaxations between the continuous and convex hull
  representations for zero-one programming problems.
\newblock {\em SIAM Journal of Discrete Mathematics}, 3(3):411--430, 1990.

\bibitem{Takhanov10:stacs}
Rustem Takhanov.
\newblock A {D}ichotomy {T}heorem for the {G}eneral {M}inimum {C}ost
  {H}omomorphism {P}roblem.
\newblock In {\em Proceedings of the 27th International Symposium on
  Theoretical Aspects of Computer Science (STACS'10)}, pages 657--668, 2010.

\bibitem{tz12:focs}
Johan Thapper and Stanislav {\noopsort{ZZ}\v{Z}}ivn\'y.
\newblock The power of linear programming for valued {C}{S}{P}s.
\newblock In {\em Proceedings of the 53rd Annual IEEE Symposium on Foundations
  of Computer Science (FOCS'12)}, pages 669--678. IEEE, 2012.

\bibitem{tz13:stoc}
Johan Thapper and Stanislav {\noopsort{ZZ}\v{Z}}ivn\'y.
\newblock The complexity of finite-valued {C}{S}{P}s.
\newblock In {\em Proceedings of the 45th ACM Symposium on the Theory of
  Computing (STOC'13)}, pages 695--704. ACM, 2013.

\bibitem{tz12:arxiv-dichotomy-v2}
Johan Thapper and Stanislav {\noopsort{ZZ}\v{Z}}ivn\'y.
\newblock The complexity of finite-valued {C}{S}{P}s.
\newblock Technical report, February 2015.
\newblock arXiv:1210.2977v3. An extended abstract appeared in Proc. STOC'13,
  submitted for publication.

\bibitem{tz15:necessary}
Johan Thapper and Stanislav {\noopsort{ZZ}\v{Z}}ivn\'y.
\newblock {Necessary Conditions on Tractability of Valued Constraint
  Languages}.
\newblock Technical report, February 2015.
\newblock Submitted for publication, arXiv:1502.03482.

\bibitem{Uppman13:icalp}
Hannes Uppman.
\newblock The {C}omplexity of {T}hree-{E}lement {M}in-{S}ol and {C}onservative
  {M}in-{C}ost-{H}om.
\newblock In {\em Proceedings of the 40th International Colloquium on Automata,
  Languages, and Programming (ICALP'13)}, volume 7965 of {\em Lecture Notes in
  Computer Science}, pages 804--815. Springer, 2013.

\bibitem{Yoshida14:itcs}
Yuichi Yoshida and Yuan Zhou.
\newblock Approximation schemes via {S}herali-{A}dams hierarchy for dense
  constraint satisfaction problems and assignment problems.
\newblock In {\em Innovations in Theoretical Computer Science (ITCS'14)}, pages
  423--438. {ACM}, 2014.

\end{thebibliography}

\newcommand{\noopsort}[1]{}

\end{document}